\newtheorem{theorem}{Theorem}[section]
\newtheorem{lemma}[theorem]{Lemma}
\newtheorem{corollary}[theorem]{Corollary}
\newtheorem{remark}[theorem]{Remark}
\newcommand{\R}{{\mathord{\mathbb R}}}
\newcommand{\Sp}{{\mathop{\mathbb S}}}
\newcommand{\vau}{{\mathord{\mathbf v}}}
\newcommand{\weh}{{\mathord{\mathbf w}}}
\newcommand{\de}{{\mathord{\mathrm{d}}}}
\begin{document}
\title{Decay of Information for the Kac Evolution}

\author{F. Bonetto$^1$} 
\address{$1$ School of Mathematics, Georgia Institute of Technology Atlanta, GA 30332, United States of America}
\email{{bonetto@math.gatech.edu}}
\author{R. Han$^2$}
\address{$2$ School of Mathematics, Georgia Institute of Technology Atlanta, GA 30332, United States of America. Current address: Department of Mathematics, Louisiana State University, Baton Rouge, LA 70803, United States of America}
\email{{rhan@lsu.edu}}
\author{M. Loss$^1$}
\email{{loss@math.gatech.edu}}

\subjclass[2010]{Primary: 82C22; Secondary: 60J25.}
\keywords{Kac  model,  Entropy  decay,  Heat reservoir}

\thanks{\copyright~2017 by the authors. Reproduction of this article by any means permitted for non-commercial purposes. U.S. National Science Foundation grants DMS-1907643 (F.B),
DMS-2053285 (R.H.) and DMS-1856645 (M.L) are gratefully acknowledged.}

\begin{abstract} 
We consider a system of $M$ particles in contact with a heat reservoir of $N\gg M$ particles. The evolution in the system and the reservoir, together with their interaction, are modeled via the Kac's Master Equation. We chose the initial distribution with total energy $N+M$ and show that if the reservoir is initially in equilibrium, that is if the initial distribution depends only on the energy of the particle in the reservoir, then the entropy of the system decay exponentially to a very small value. We base our proof on a similar property for the Information. A similar argument allows us to greatly simplify the proof of the main result in \cite{BGLR}. 
\end{abstract}

\maketitle

\section{Introduction}

In 1955 Mark Kac \cite{kac} introduced a simple model to study the evolution of a gas of $N$ particles undergoing pairwise collisions. Instead of following the deterministic evolution of the particle till a collision takes place, it is assumed that collisions happen randomly with every particle undergoing, on average, a given number of collision per unit time. Moreover, when a collision takes place, the energy of the two particles is randomly redistributed between them. In such a situation one can neglect the position of the particles and focus on their velocities. Finally, to obtain a model as simple as possible one assume that the particles move in one space dimension. This naturally leads to an evolution governed by a master equation for the probability distribution $F(\mathbf v)$, where $\mathbf v\in \mathbb{R}^N$ describes the velocities of the particles.

The Kac's master equation has proved to be very useful for understanding foundational issues of kinetic theory. In this context, Kac invented propagation of chaos and gave a satisfactory derivation of the spatially homogeneous Boltzmann-Kac equation.  All this is explained in great detail in Kac's original work \cite{kac} and \cite{KacBook} and will not be repeated here.

Kac's master equation also provides a natural setting to study approach to equilibrium. 

Exponential convergence to equilibrium in the sense of $L^2$ distance with a rate independent of $N$ was conjectured by Kac and it was established in \cite{Jeanvresse} while this rate was explicitly computed in \cite{CCL1}. A more natural way to define approach to equilibrium is via the entropy. This provides a better setting since the entropy, in general, grows only linearly with the number of particles. There is no result of exponential decay of entropy with a rate that is uniform in $N$ for the original Kac model. Moreover estimates of the entropy production rate seems to point to a slow decay of the entropy, at least for short times, see \cite{villani, amit}.

In a recent papers \cite{BLTV,BGLR} a different approach is proposed. One considers a {\it small} system with $M$ particles in contact with a {\it large} heat reservoir with $N$ particles. The probability distributions of the system plus reservoir is thus $F(\mathbf v, \mathbf w)$ where $\mathbf v \in \R^M, \mathbf w \in \R^N$.  Here  $\mathbf v = (v_1, \dots, v_M)$ describes the particles of the system and $\mathbf w = (w_{M+1}, \dots, w_{N+M})$ the particles of the reservoir. The evolution is given by the Kac's master equation
\begin{equation} \label{eq: masterequation}
\frac{\partial F}{\partial t} = \mathcal L F \ ,  \qquad F(\mathbf v, \mathbf w, 0) = 
F_0(\mathbf v, \mathbf w)  \ , 
\end{equation}
where 
\begin{equation}\label{eq:model}
\mathcal{L} = \frac{\lambda_S}{M-1} \sum_{1 \leq i<j \leq M} \left( R_{ij} - I 
\right) + \frac{\lambda_R}{N-1} \sum_{M < i<j \leq N + M} \left( R_{ij} - I 
\right) + \frac{\mu}{N} \sum_{i=1}^{M} \sum_{j=M+1}^{M+N} \left( R_{ij} - I 
\right) \ , 
\end{equation}
and $R_{ij}$ is given as follows. For $1 \le i<j \le M$ one has
\begin{align} \label{rotation}
(R_{ij}F)(\mathbf v, \mathbf w) = \int_{-\pi}^\pi  \nu(\de \theta) \, \de 
\theta \, F(r_{ij}(\theta)^{-1}(\mathbf v, \mathbf w)) \ , 
\end{align}
where
\begin{align}\label{eq:R}
r_{ij}(\theta)^{-1}(\mathbf v, \mathbf w) = (v_1, \dots, v_i \cos \theta - v_j 
\sin \theta, \dots, v_i \sin \theta  + v_j \cos\theta , \dots, v_M, \mathbf w) 
\ . 
\end{align}
The other $R_{ij}$s are defined analogously. The $R_{i,j}$ represents the effect of a collision between particle $i$ and particle $j$. The particles move in one dimension and upon collision energy is randomly redistributed between the two colliding particle. The probability distribution $\nu(\de \theta)$ plays the role of the scattering cross section. More assumptions will be made below.

The various constants have a simple interpretation in that $\lambda_S$ is the rate at which one particle from the system will scatter with any other particle in the system and similarly for $\lambda_R$. Likewise, $\mu$ is the rate at which a single particle of the system will scatter with any particle in the reservoir. 

The main point in \cite{BLTV,BGLR} is an analysis of the evolution of  a state where the reservoir is initially in equilibrium but the system is not. In this context equilibrium means that the initial distribution for the reservoir is given by a Maxwellian (Gaussian) function:
\begin{equation} \label{initialcondition}
F_0(\mathbf v, \mathbf w) = f_0(\mathbf v) e^{-\pi |\mathbf w|^2} 
\end{equation}
where we chose units in such a way that the inverse temperature is $\beta =2\pi$.  As time progresses the system and the reservoir interact and their joint state relax to a global equilibrium. On the other hand, the rate at which a particular particle from the reservoir will scatter with a particle in the system is given by $\mu M/N$. Hence, when $N$ is large compared to $M$ this process is suppressed and one expects that the reservoir does not move far from its equilibrium. This is indeed what was proved in \cite{BLTV} where the reader will find additional information on the physical ideas behind this model.

\begin{remark}\label{rem:local}
 As observed in \cite{BLTV,BGLR} the evolution in \eqref{eq: masterequation}, \eqref{eq:model} lends itself to a different interpretation. If we take 
 \[
  \lambda_S=\frac{2\lambda (M-1)}{N+M-1},\qquad \lambda_R=\frac{2\lambda (N-1)}{N+M-1},\qquad  \mu=\frac{2\lambda N}{N+M-1}
 \]
then we have that
\[
\mathcal L  =  \lambda(M+N)(Q-I)\qquad Q = \frac{1}{\left(\begin{array}{c} M+N \\2 \end{array}\right)} \sum_{1= i<j } ^{M+N}R_{i,j} \, .
\]
In this setting, together with the initial condition \eqref{initialcondition}, \eqref{eq: masterequation} represents the evolution of a Kac system with $N+M$ particle where, initially, $N$ particle are in equilibrium while $M$ are out of equilibrium, that is a ``local perturbation''.

\end{remark}

The central result in \cite{BGLR} concerns the entropy of the marginal on the system of the full distribution defined as:
\begin{equation}\label{marginal_R}
f(\mathbf v, t) =[\mathcal M e^{\mathcal{L}t}F_0](\mathbf v, t):= \int_{\R^N} \left[ e^{\mathcal L t} F_0\right](\mathbf v, \mathbf w) e^{-\pi |\mathbf v|^2}  \de \mathbf w \ ,
\end{equation}
Its entropy with respect to the Maxwellian distribution $\gamma(\mathbf v)=e^{\pi |\mathbf v|^2}$ is 
given by
\begin{equation*}
\widetilde S(f(\cdot, t)) := \int_{\R^M} f(\mathbf v, t) \log\left( \frac{f(\mathbf v, 
t)}{e^{-\pi |\mathbf v|^2}} \right) \,  \de \vau  \ .
\end{equation*}
Note that the entropy can be written in another way, which is easier to analyze. If we set
\begin{align}\label{def:h0}
f_0(\mathbf v) = e^{-\pi |\mathbf v|^2} h_0(\mathbf v) \ ,
\end{align}
then because any rotationally invariant distribution is invariant under the Kac's time evolution, we get
\[
e^{\mathcal L t} F_0 (\mathbf v, \mathbf w) = e^{-\pi(|\mathbf v|^2+ |\mathbf w|^2)}  \left[e^{\mathcal L t} h_0\right](\mathbf v,\mathbf w)
\]
It follows that we can write $f(\mathbf v, t) = e^{-\pi |\mathbf v|^2} h(\mathbf v, t)$ where
\[
h(\mathbf v, t) := \int_{\R^N} \left[ e^{\mathcal L t} h_0\right](\mathbf v, \mathbf w)  e^{-\pi |\mathbf w|^2} \de \mathbf w
\]
and, thus,
\begin{align}\label{def:S}
S(h(\cdot, t)): = \int_{\R^M} h(\mathbf v, t) \log (h(\mathbf v, 
t)) e^{-\pi |\mathbf v|^2}  \, \de \mathbf v= \widetilde S(f(\cdot, t)) \ .
\end{align}
%

The following theorem was proved in \cite{BGLR}:
\begin{theorem} \label{thm: main} Let $\nu(\de \theta) = \rho(\theta) \de \theta$ be a probability distribution with an absolutely convergent 
Fourier series such that
\begin{align} \label{assumprho}
\int_{-\pi}^\pi \nu(\de \theta)\, \sin \theta \cos \theta =0 
\  .
\end{align}
The entropy of $f(\mathbf v, t)$ {\it relative} of to the thermal state $e^{-\pi |\mathbf 
v|^2}$ then satisfies  
\begin{equation*}
\tilde{S}(f(\cdot, t)) \le \left[\frac{M}{N+M} + \frac{N}{N+M} e^{- t \mu_{\rho} 
(N+M)/N} \right] \tilde{S}(f_0)  \ , 
\end{equation*}
where
\begin{equation*} 
\mu_{\rho} = \mu \int_{-\pi}^\pi  \rho(\theta) \sin^2(\theta)\, \de \theta \, 
 \ , 
\end{equation*}
and $f_0$ is the initial condition of the system  introduced in \eqref{initialcondition}.
\end{theorem}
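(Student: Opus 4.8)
The plan is to control the system's entropy by comparing the true evolution against a simpler "reservoir-in-equilibrium" reference evolution, using the concavity of entropy and the structure of the generator $\mathcal L$. The central quantity is an \emph{Information}-type functional (as announced in the abstract): rather than working directly with $\widetilde S$, I would introduce
\begin{equation*}
\Phi(t) := \int_{\R^M}\!\int_{\R^N} \bigl[e^{\mathcal L t}h_0\bigr](\mathbf v,\mathbf w)\,\log\!\bigl[e^{\mathcal L t}h_0\bigr](\mathbf v,\mathbf w)\, e^{-\pi(|\mathbf v|^2+|\mathbf w|^2)}\,\de\mathbf v\,\de\mathbf w,
\end{equation*}
the full relative entropy of $e^{\mathcal L t}F_0$ with respect to the global Gaussian. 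Since each $R_{ij}$ is an averaging (doubly stochastic) operator with respect to the Gaussian weight, $\mathcal L$ is a self-adjoint negative operator on $L^2(\gamma)$ and $e^{\mathcal L t}$ is a Gaussian-preserving Markov semigroup; hence $\Phi$ is nonincreasing and, by Jensen's inequality applied to the partial integration in $\mathbf w$, $\widetilde S(f(\cdot,t)) = S(h(\cdot,t)) \le \Phi(t)$ for all $t$. In particular the hard content is \emph{not} a monotonicity statement but a quantitative decay rate for the system marginal.

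Next I would exploit the special initial condition $F_0(\mathbf v,\mathbf w) = e^{-\pi|\mathbf w|^2}f_0(\mathbf v)$, i.e. $h_0(\mathbf v,\mathbf w) = h_0(\mathbf v)$ depends only on $\mathbf v$. The key algebraic observation is that on such functions the reservoir–reservoir term $\sum_{M<i<j}(R_{ij}-I)$ annihilates $h_0$, the system–system term acts only in the $\mathbf v$ variables, and, crucially, upon taking the $\mathbf w$-marginal each system–reservoir collision operator $R_{ij}$ with $j$ in the reservoir contracts toward the Gaussian fixed point: integrating $R_{ij}g$ against $e^{-\pi|\mathbf w|^2}\de\mathbf w$ and using assumption \eqref{assumprho} produces, after one collision, a convex combination of $g$ and its average over the $v_i$ variable with weights $1-\mu_\rho/\mu$ and $\mu_\rho/\mu$ — this is where the cross term $\int\nu(\de\theta)\sin\theta\cos\theta = 0$ is needed to kill the cross contribution $v_iv_j$ and where the factor $\int\rho\sin^2\theta\,\de\theta$ enters. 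Writing a differential inequality for the marginal entropy $S(h(\cdot,t))$, the system–system part contributes $\le 0$ (again by the doubly-stochastic/Jensen argument), while the system–reservoir part of rate $\mu/N$ summed over the $MN$ pairs produces a contraction of the "non-equilibrium part" of the marginal with effective rate $\mu_\rho\cdot M/N$ per collision event; keeping track of constants gives a Gronwall-type inequality whose solution is exactly the stated bound with the floor term $\frac{M}{N+M}\widetilde S(f_0)$ coming from the fact that the collision operators equilibrate toward the \emph{global} energy distribution on $N+M$ particles rather than toward the pure Gaussian, so a fraction $M/(N+M)$ of the information is "trapped."

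The main obstacle I anticipate is making rigorous the step from the $L^2$/spectral contraction of the $R_{ij}$ to an \emph{entropic} contraction of the marginal with the claimed rate: convexity of entropy gives monotonicity for free but not a rate, so one must show that the generator, acting on $S(h(\cdot,t))$, satisfies $\frac{\de}{\de t}S(h(\cdot,t)) \le -\frac{\mu_\rho(N+M)}{N}\bigl(S(h(\cdot,t)) - \text{floor}\bigr)$. I expect this to follow by first proving the analogous inequality at the level of the full function $e^{\mathcal L t}h_0$ (where one can use that $\mathcal L$ commutes with the conditional-expectation projections onto rotationally invariant functions, so the "floor" is a genuine fixed point of the dynamics) and then pushing it through the marginal map using that $\mathcal M$ is an entropy-contracting conditional expectation. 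An alternative route, which may be cleaner and is hinted at by the remark on the local-perturbation reformulation, is to run the argument for the symmetric generator $\lambda(M+N)(Q-I)$, where one can use permutation symmetry and an induction on the number of out-of-equilibrium particles; I would try both and present whichever yields the constant $\mu_\rho(N+M)/N$ most transparently.
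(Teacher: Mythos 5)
There is a genuine gap, and it sits exactly where you place your ``main obstacle''. Your proof hinges on the differential inequality $\frac{\de}{\de t}S(h(\cdot,t))\le -\mu_\rho\frac{N+M}{N}\bigl(S(h(\cdot,t))-\text{floor}\bigr)$, but nothing in the proposal produces it. The marginal $h(\cdot,t)$ does not satisfy a closed equation: its time derivative involves the full distribution $e^{\mathcal L t}h_0(\mathbf v,\mathbf w)$, which after the first system--reservoir collision is no longer of the form (function of $\mathbf v$)$\times$Gaussian, so a Gronwall argument on $S(h(\cdot,t))$ has nothing to close on. Moreover, an inequality of the form ``entropy production $\ge$ rate $\times$ entropy'' is precisely a Cercignani-type bound of the kind that is not available for Kac models and whose failure (slow entropy production, cf.\ the references to Villani and Einav in the introduction) is the very reason the direct entropy route is avoided; Jensen/convexity gives monotonicity but no rate, as you yourself acknowledge. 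Your single-collision claim is also not correct as stated: integrating $R_{ij}g$ against $e^{-\pi|\mathbf w|^2}\de\mathbf w$ gives, for each fixed $\theta$, an Ornstein--Uhlenbeck-type smoothing of $g$ in the $v_i$ variable, not a convex combination of $g$ and its $v_i$-average with weights $1-\mu_\rho/\mu$ and $\mu_\rho/\mu$; that exact convex-combination structure holds at the level of second moments, i.e.\ for the quadratic form $A^TA$ acting on gradients, not at the level of distributions or entropies. And once the state depends genuinely on $\mathbf w$, the system--reservoir correlations prevent iteration -- handling them at the entropy level is exactly what forced the original proof in \cite{BGLR} into hypercontractivity and Brascamp--Lieb machinery. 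Finally, your $\Phi(t)$ is the full relative entropy, not the Fisher information, so the ``information'' idea announced in the abstract is never actually used.

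The paper's route replaces the differential inequality entirely. One expands $e^{\mathcal L t}=e^{-\Lambda t}\sum_k\frac{(\Lambda t)^k}{k!}Q^k$ in collision histories; since $h_0$ depends only on $\mathbf v$, each history contributes $h_0(A^T\mathbf v+C^T\mathbf w)$, and Schwarz's inequality plus a change of variables bounds the information of its marginal by $\int_{\R^M}\nabla h_0^T A^TA\,\nabla h_0\,h_0^{-1}e^{-\pi|\mathbf v|^2}\de\mathbf v$. Convexity of $I$ and the summation rule (Lemma \ref{sumrule}), which is where \eqref{assumprho} enters and which evaluates the averaged $A^TA$ as $\mathcal C_k I_M$ with $\mathcal C_k$ as in \eqref{def:Ck}, then give Theorem \ref{main1}. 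The entropy statement follows not from Gronwall but from the identity \eqref{entropyinfo}, $S(f)=\int_0^\infty I(P_sf)\,\de s$, combined with the commutations $e^{\mathcal L t}P_s=P_se^{\mathcal L t}$ and $\mathcal M P_s=P_s\mathcal M$ (Lemma \ref{lem:Inf_to_Entropy_R}): the information bound holds for every initial datum $P_sh_0$, and integrating in $s$ transfers it verbatim, with the same constant $C(t)$, to the entropy. If you want to repair your outline, the step to replace is the entropy--production inequality: prove the decay for the Fisher information along collision histories and convert to entropy through the Ornstein--Uhlenbeck semigroup.
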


\begin{remark}
 In \cite{BGLR} the extra condition that $N>M$ is added to Theorem \ref{thm: main}.  The proof of the theorem presented below does not require any condition on $N$ and $M$ and on closer inspection this condition is also unnecessary for the proof presented in \cite{BGLR}.
\end{remark}
\begin{remark}
Condition \eqref{assumprho} follows from the natural assumption of microscopic reversibility, that is $\rho(\theta)=\rho(-\theta)$. In our setting, it is only needed in the proof of the summation rule in Lemma \ref{sumrule}.
\end{remark}

The point about Theorem \ref{thm: main} is that, when the size of the reservoir $N$ is large compared to the size of the system $M$, the entropy of the marginal converges exponentially fast to a small value with a rate that
is essentially independent of the size of the system. Observe, however, that the limiting state $\tilde f(\mathbf v)=\lim_{t\to\infty}f(\mathbf v,t)$ is, in general, not a Maxwellian distribution so that $S(\tilde f)>0$. The assumption on the initial condition of the system is very weak and, as explained in \cite{BGLR}, the theorem is essentially sharp.

There are, however, some unsatisfactory aspects of this result or rather its proof. For one, the assumption on the smoothness of the collision probabilities is rather complicated and is there for technical reasons. More importantly, while the proof has a natural ring to it, it is rather involved. More serious is that the method does not seem to carry over to the standard Kac model which operates on the sphere $\Sp^{M+N-1}(\sqrt{N+M})$. The proof uses the expansion in terms of collision histories and estimates the contribution to the entropy by  a fixed number of collisions using Nelson's hyper-contractive estimate and then estimates the correlations using Brascamp-Lieb type inequalities together with a re-summation of the series. It is very unclear how to carry such a program over to the sphere case. Despite the wonderful analogy between large dimensional spheres and Gauss space the connection is not uniform enough that Theorem \ref{thm: main} carries over to the sphere case.

The point of this paper is to remedy the faults described above. A very natural way to think about such problem is to think in terms of the information. There is a well known connection between entropy and  information which was used by Ledoux \cite{Ledoux} to give a very simple proof of Nelson's sharp hyper-contractive estimate. Not only does this argument yield the inequality, but it delivers all the functions for which there is equality.

The information of a function $f>0$, normalized with respect to Gauss measure $e^{-\pi |\mathbf{v}|^2} \de\mathbf v$ is given by
\begin{align}\label{def:Information}
I(f) = \int_{\R^M} \frac{|\nabla f(\mathbf{v})|^2}{f(\mathbf{v})}e^{-\pi |\mathbf{v}|^2} \de  \mathbf{v} \ .
\end{align}
Let $P_s$ be the Ornstein-Uhlenbeck semigroup on $\R^d$ defined by
\begin{align}\label{def:Ps}
[P_s f](\mathbf x) = \int_{\R^d }f(e^{-s}\mathbf x+ \sqrt{1-e^{-2s}} \mathbf y) e^{-\pi \mathbf |y|^2}   \de \mathbf y.
\end{align}
In our context, $d$ could be equal to $M$ or $M+N$ depending on the function space it acts on.

An important relation between the entropy and information is given by the equation
\begin{equation} \label{entropyinfo}
S(f)=\int_{\R^M} f  \log (f) e^{-\pi|\mathbf v|^2} \de\mathbf{v} = \int_0^\infty I(P_sf) \de s \ .
\end{equation}
The connection between information and entropy for the various Kac models that we study will be explored further in Section \ref{sec:Information}, which will allow us to pass from decay of Information to that of Entropy.

Our first result is a version of the previously  mentioned theorem in terms of the information. 
\begin{theorem} \label{main1}
Assume that $\nu$ is a probability measure on the circle subject to the condition
\begin{align} \label{assumnu}
\int_{-\pi}^\pi \nu(\de \theta) \, \sin \theta \cos \theta =0 \ .
\end{align}
Then
 \begin{equation*}
I(h(\cdot, t)) \le  \left[\frac{M}{N+M} + \frac{N}{N+M} e^{- t \mu_{\nu} 
(N+M)/N} \right]I(h_0)  \ , 
\end{equation*}
where
\begin{equation*} 
\mu_{\rho} = \mu \int_{-\pi}^\pi  \nu ( \de \theta)  \, 
\sin^2(\theta) \ , 
\end{equation*}
and $h_0$ is as introduced in \eqref{initialcondition} and \eqref{def:h0}.
\end{theorem}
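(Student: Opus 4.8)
\emph{Proof plan.} The idea is to lift the estimate from the system marginal $h(\cdot,t)$ to the full density $H(\cdot,\cdot,t):=e^{\mathcal L t}h_0$ on $\R^{M+N}$, normalized with respect to the Gauss measure $\de\gamma=e^{-\pi(|\mathbf v|^2+|\mathbf w|^2)}\de\mathbf v\,\de\mathbf w$, and to control the two partial informations
\[
I_{\mathbf v}(H)=\sum_{k=1}^{M}\int\frac{|\partial_{v_k}H|^2}{H}\de\gamma\ ,\qquad
I_{\mathbf w}(H)=\sum_{k=M+1}^{M+N}\int\frac{|\partial_{w_k}H|^2}{H}\de\gamma\ ,
\]
along the flow. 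Two soft facts will be recorded first. \emph{(i) Marginalization:} $I(h(\cdot,t))\le I_{\mathbf v}(H(t))$; this follows from Jensen's inequality applied to the jointly convex perspective map $(\mathbf p,s)\mapsto|\mathbf p|^2/s$ and the probability measure $e^{-\pi|\mathbf w|^2}\de\mathbf w$, after noting that $\nabla_{\mathbf v}h=\int(\nabla_{\mathbf v}H)\,e^{-\pi|\mathbf w|^2}\de\mathbf w$. \emph{(ii) Decay of total information:} $I(H(t))=I_{\mathbf v}(H(t))+I_{\mathbf w}(H(t))$ is nonincreasing in $t$, because each $R_{ij}$ averages $H$ over $\gamma$-preserving orthogonal maps, the information functional is convex, and the full information is invariant under such maps, so every summand of $\mathcal L$ can only decrease $I(H(\cdot))$.

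The heart of the argument is a single-collision computation. By convexity of $I_{\mathbf v}$ one has $I_{\mathbf v}(R_{ij}H)\le\int\nu(\de\theta)\,I_{\mathbf v}\big(H\circ r_{ij}(\theta)^{-1}\big)$. Computing $\nabla_{\mathbf v}\big(H\circ r_{ij}(\theta)^{-1}\big)$ by the chain rule and changing variables by the $\gamma$-preserving rotation $r_{ij}(\theta)$, everything reduces to the elementary identities $\int\nu(\de\theta)\cos^2\theta=1-b$, $\int\nu(\de\theta)\sin^2\theta=b$, and $\int\nu(\de\theta)\sin\theta\cos\theta=0$, where $b:=\int_{-\pi}^{\pi}\nu(\de\theta)\sin^2\theta=\mu_\nu/\mu$; the last identity is exactly hypothesis \eqref{assumnu} and is what kills the otherwise sign-indefinite cross term $(\partial_iH)(\partial_jH)$. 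The outcome is: if $i,j$ are both in the system or both in the reservoir then $I_{\mathbf v}(R_{ij}H)\le I_{\mathbf v}(H)$ (the rotation is internal to the $\mathbf v$-block, or does not touch $\mathbf v$); and for a cross pair $i\le M<j$,
\[
I_{\mathbf v}(R_{ij}H)\ \le\ I_{\mathbf v}(H)+b\big(I_j(H)-I_i(H)\big)\ ,\qquad I_k(H):=\int\frac{|\partial_kH|^2}{H}\de\gamma\ .
\]

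Feeding this into the flow gives a differential inequality. Writing $e^{\mathcal L\varepsilon}=e^{-\kappa\varepsilon}e^{\kappa\varepsilon P}$ with $\kappa P=\mathcal L+\kappa I=\sum_{ij}c_{ij}R_{ij}$ (so $\kappa=\sum_{ij}c_{ij}$ and $P$ is a convex combination of the $R_{ij}$), one has $e^{\mathcal L\varepsilon}H(t)=(1-\kappa\varepsilon)H(t)+\kappa\varepsilon PH(t)+O(\varepsilon^2)$; convexity of $I_{\mathbf v}$ together with the single-collision bounds then yields, for the upper right derivative,
\[
\frac{\de^+}{\de t}I_{\mathbf v}(H(t))\ \le\ \sum_{ij}c_{ij}\big(I_{\mathbf v}(R_{ij}H(t))-I_{\mathbf v}(H(t))\big)\ \le\ \mu_\nu\Big(\tfrac{M}{N}I_{\mathbf w}(H(t))-I_{\mathbf v}(H(t))\Big)\ ,
\]
since the within-group terms are $\le 0$ and the cross terms, weighted by $\mu/N$ over the $MN$ pairs, sum to $\mu_\nu(\tfrac{M}{N}I_{\mathbf w}-I_{\mathbf v})$. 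Using fact (ii) as $I_{\mathbf w}(H(t))\le I(h_0)-I_{\mathbf v}(H(t))$ closes the inequality into $\tfrac{\de^+}{\de t}I_{\mathbf v}(H(t))\le\tfrac{\mu_\nu(M+N)}{N}\big(\tfrac{M}{M+N}I(h_0)-I_{\mathbf v}(H(t))\big)$, and a one-line Grönwall comparison with $I_{\mathbf v}(H(0))=I(h_0)$ gives $I_{\mathbf v}(H(t))\le\big[\tfrac{M}{M+N}+\tfrac{N}{M+N}e^{-t\mu_\nu(M+N)/N}\big]I(h_0)$; combining with (i) finishes the proof. The one genuinely technical point is justifying the differentiation in the last display — the regularity and strict positivity of $H(t)$ needed to make $I_{\mathbf v}$ differentiable along the flow — which I would handle by a routine approximation of $h_0$ by smoothed, strictly positive approximants (for which $e^{\mathcal L t}$ remains strictly positive and the functionals finite), passing to the limit by lower semicontinuity of the information.
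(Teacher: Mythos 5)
Your proposal is correct in substance, but it is a genuinely different proof from the one in the paper, so a comparison is in order. The paper expands $e^{\mathcal L t}=e^{-\Lambda t}\sum_k\frac{(\Lambda t)^k}{k!}Q^k$ in collision histories, applies the same Schwarz/marginalization step as your fact (i) \emph{inside each history} to reduce everything to the quadratic form $\int \nabla h_0^T K\,\nabla h_0/h_0\,e^{-\pi|\mathbf v|^2}\,\de\mathbf v$ with $K$ the $\nu$-average of $A^TA$ over the history, and then evaluates $K=\mathcal C_k I_M$ exactly (Lemma \ref{sumrule}) by diagonalizing a $2\times 2$ matrix $\mathcal P$; resumming the exponential series gives the constant in the theorem. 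You instead track the partial informations $I_{\mathbf v},I_{\mathbf w}$ of the full evolved density $H(t)$: your one-collision inequality is precisely the $k=1$ content of the sum rule (with \eqref{assumnu} killing the $\sin\theta\cos\theta$ cross term in both arguments), and the $k$-fold iteration is replaced by a differential inequality closed with $I(H(t))\le I(h_0)$ and Gr\"onwall; this closure loses nothing because $(1,1)$ is the eigenvalue-one eigenvector of the paper's $\mathcal P$, which is why you land on the identical constant, and the final marginalization is applied only once, at time $t$. What the paper's route buys is that it is purely algebraic -- a series with positive weights -- so no differentiation of $t\mapsto I_{\mathbf v}(H(t))$ ever needs justification; what your route buys is that it avoids the bookkeeping over histories and the product-of-rotations matrix identity, replacing them by a single-collision computation plus an ODE comparison. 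On your one technical point, a cleaner fix than smoothing $h_0$ is to stay discrete: from $e^{\mathcal L\varepsilon}=e^{-\Lambda\varepsilon}\sum_n\frac{(\Lambda\varepsilon)^n}{n!}P^n$, convexity, rotation invariance of the full information, and $I_{\mathbf v}(P^nH(t))\le I(H(t))\le I(h_0)$ for $n\ge 2$ give the uniform-in-$t$ bound $I_{\mathbf v}(H(t+\varepsilon))\le I_{\mathbf v}(H(t))+\varepsilon\,\mu_\nu\bigl(\tfrac MN I_{\mathbf w}(H(t))-I_{\mathbf v}(H(t))\bigr)+C\varepsilon^2 I(h_0)$, which you can iterate over a partition of $[0,t]$ and let the mesh go to zero; this sidesteps Dini derivatives and any continuity or strict-positivity discussion of $I_{\mathbf v}(H(\cdot))$ altogether (and the theorem is trivial when $I(h_0)=\infty$).
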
 \label{main2}
As a corollary we have
\begin{corollary}
Theorem \ref{thm: main} continuous to hold under the assumption that the probability measure $\nu$  satisfies the condition
\begin{align}
\int_{-\pi}^\pi \nu(\de \theta) \, \sin \theta \cos \theta =0 
\  .
\end{align}
\end{corollary}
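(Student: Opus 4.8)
The plan is to deduce the entropy bound directly from the information bound of Theorem~\ref{main1}, using the integral identity \eqref{entropyinfo} together with the fact that the Ornstein--Uhlenbeck semigroup commutes with both the Kac evolution and the operation of taking the reservoir marginal.

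First I would record two commutation relations. Since the Gaussian weight is orthogonally invariant, $P_s$ on $\R^{M+N}$ (definition \eqref{def:Ps}) satisfies $P_s(f\circ O)=(P_sf)\circ O$ for every orthogonal $O$; as each $R_{ij}$ is a $\nu$-average of the rotations $r_{ij}(\theta)$, Fubini gives $R_{ij}P_s=P_sR_{ij}$ for every $i,j$, hence $\mathcal L P_s=P_s\mathcal L$ and $e^{\mathcal L t}P_s=P_se^{\mathcal L t}$. Second, writing $P_s^{(M+N)}=P_s^{(M)}\otimes P_s^{(N)}$, the reservoir marginal $\mathcal M\colon G\mapsto\int_{\R^N}G(\cdot,\mathbf w)\,e^{-\pi|\mathbf w|^2}\de\mathbf w$ satisfies $\mathcal M\,P_s^{(M+N)}=P_s^{(M)}\,\mathcal M$, simply because $e^{-\pi|\mathbf w|^2}\de\mathbf w$ is the invariant measure of $P_s^{(N)}$. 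Now set $h_0^{(s)}:=P_s h_0$; since $h_0$ depends only on $\mathbf v$, so does $h_0^{(s)}$, and $h_0^{(s)}\geq0$ with $\int_{\R^M}h_0^{(s)}\,e^{-\pi|\mathbf v|^2}\de\mathbf v=1$, so $h_0^{(s)}$ is an admissible initial datum in the sense of \eqref{initialcondition}, \eqref{def:h0}. Combining the two relations,
\begin{equation*}
P_s\,h(\cdot,t)=P_s\,\mathcal M\,e^{\mathcal L t}h_0=\mathcal M\,e^{\mathcal L t}\,h_0^{(s)}\ ,
\end{equation*}
so $P_sh(\cdot,t)$ is precisely the time-$t$ system marginal of the evolution started from $h_0^{(s)}$.

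Next I would apply Theorem~\ref{main1} with the initial datum $h_0^{(s)}$, writing $\mu_\nu=\mu\int_{-\pi}^\pi\nu(\de\theta)\,\sin^2\theta$ (which equals $\mu_\rho$ when $\nu(\de\theta)=\rho(\theta)\de\theta$, so the rate matches the one in Theorem~\ref{thm: main}), and note that the prefactor $\frac{M}{N+M}+\frac{N}{N+M}e^{-t\mu_\nu(N+M)/N}$ does not depend on $s$. Integrating the resulting inequality
\begin{equation*}
I\big(P_sh(\cdot,t)\big)\le\Big[\tfrac{M}{N+M}+\tfrac{N}{N+M}e^{-t\mu_\nu(N+M)/N}\Big]\,I\big(P_sh_0\big)
\end{equation*}
over $s\in(0,\infty)$ and invoking \eqref{entropyinfo} on each side yields
\begin{equation*}
\widetilde S\big(f(\cdot,t)\big)=S\big(h(\cdot,t)\big)=\int_0^\infty I\big(P_sh(\cdot,t)\big)\de s\le\Big[\tfrac{M}{N+M}+\tfrac{N}{N+M}e^{-t\mu_\nu(N+M)/N}\Big]\int_0^\infty I\big(P_sh_0\big)\de s\ ,
\end{equation*}
and the last integral equals $S(h_0)=\widetilde S(f_0)$ by \eqref{entropyinfo} and \eqref{def:S}. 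This is exactly the inequality of Theorem~\ref{thm: main}, now established under the sole hypothesis \eqref{assumnu}.

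The argument is essentially mechanical once Theorem~\ref{main1} is available; I expect the only parts requiring any care to be the two commutation identities and the observation that $P_sh_0$ is an admissible initial condition, so that Theorem~\ref{main1} may legitimately be applied to it. On the technical side, if $\widetilde S(f_0)=\infty$ there is nothing to prove, while if $\widetilde S(f_0)<\infty$ then $\int_0^\infty I(P_sh_0)\,\de s<\infty$ forces all the quantities above to be finite and legitimizes interchanging the $s$-integration with the nonnegative integrand. Note that the smoothness assumption on $\nu$ from Theorem~\ref{thm: main} never enters: only \eqref{assumnu} is used, and only through Theorem~\ref{main1}, which is precisely the content of the corollary.
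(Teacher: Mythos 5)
Your argument is correct and is essentially the paper's own route: it is exactly the content of Lemma \ref{lem:Inf_to_Entropy_R} in Section \ref{sec:Information} (the identity \eqref{entropyinfo} combined with the commutations \eqref{eq:LtPs_commute} and \eqref{eq:PsM_commute}), applied to the information bound of Theorem \ref{main1}. Your additional observation that $P_sh_0$ remains in the admissible class (depending only on $\mathbf v$, nonnegative, normalized), so Theorem \ref{main1} may be applied to it, is a welcome clarification of a point the paper's lemma leaves implicit.
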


We turn now our attention to the original setting of Kac's model as an evolution on $\Sp^{M+N-1}(\sqrt{M+N})$.
The master equation is the same as \eqref{eq: masterequation} and \eqref{eq:model} but now the distribution $F$ is a function in $L^1(\Sp^{M+N-1}(\sqrt{M+N}))$ where the measure is taken to be the normalized uniform measure on the sphere. It is obvious that the constant function $1$ is an equilibrium state. Under additional assumption on $\nu(\de \theta)$ one can show that it is the unique equilibrium state, however, this issue will not be considered here.

We implement the idea of system versus reservoir in the following fashion. We assume that the initial condition $F_0(\mathbf v, \mathbf w)$ is a function that is invariant under all rotations that fix
the variables $\mathbf v$, i.e., $F_0(\mathbf v, \mathbf w)$ depends on $\weh$ only through $|\weh|$.

Given an arbitrary distribution $F(\mathbf v, \mathbf w)$ we define its marginal to be the average of $F$ over all rotation that fix the vector $\mathbf v$. In other words denoting the operation of taking the marginal with respect to the $\mathbf w$ variables by $\mathcal N$, we have
\begin{equation}\label{def:marginal}
\mathcal N F(\mathbf v, \mathbf w) = \int_{SO(N)} F(\mathbf v, R^{-1} \mathbf w) \de R
\end{equation}
where we integrate over the normalized Haar measure on $SO(N)$. Hence $\mathcal N F$ depends on
$\vau$ and $|\weh|$ only. As before we consider the Entropy
\[
S(F) = \int_{\Sp^{M+N-1}(\sqrt{M+N})}  F \log( F) \de \sigma
\] 
where $\sigma$ denotes the normalized volume measure on $\Sp^{M+N-1}(\sqrt{M+N})$. 
Likewise, the information of a state $F$ is given by
\begin{align}\label{def:Inf_sphere}
I(F)=\int_{\Sp^{M+N-1}(\sqrt{M+N})} \frac{|\nabla F|^2}{F}  \de \sigma
\end{align}
where the length $|\nabla F|^2$ is taken with respect to metric on the sphere.
In this case, our theorems are similar to the ones mentioned before but not quite as sharp.
\begin{theorem} \label{spherical}
As before let $\nu$ be a measure on the circle that satisfies
\[
\int_\Sp \nu(\de \theta) \sin \theta \cos \theta = 0 \ .
\]
Let $F_0(\mathbf v, \mathbf w), \mathbf v \in \R^M, \mathbf w \in \R^N$ be a distribution in $L^1(\Sp^{M+N-1}(\sqrt{M+N}))$ invariant under all rotations that fix the variable $\mathbf v$. Then 
\[
I(\mathcal N e^{\mathcal Lt} F_0) \le \left[ 2\left(\frac{M}{M+N} + e^{-\mu t} \frac{N}{M+N} \right) +
\frac MN\right]  I(F_0) \ . 
\]
\end{theorem}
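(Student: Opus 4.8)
The plan is to mimic the proof of Theorem \ref{main1} but carefully account for the geometric corrections that arise on the sphere $\Sp^{M+N-1}(\sqrt{M+N})$. First I would establish the analogue of the differential inequality for the information under the flow $e^{\mathcal L t}$ acting on the full distribution. Writing $F(t) = e^{\mathcal L t} F_0$, I would compute $\frac{\de}{\de t} I(F(t))$ and decompose it according to the three groups of terms in $\mathcal L$: the internal system collisions $R_{ij}$ with $i<j\le M$, the internal reservoir collisions, and the cross collisions. The rotations $R_{ij}$ are averages of isometries of the sphere (they rotate a two–plane), so each $R_{ij}-I$ contributes a nonpositive term to $\frac{\de}{\de t}I$; the key is to get a quantitative gain from the cross terms and from the reservoir terms, exploiting the rotational invariance of $F_0$ in $\mathbf w$, which is preserved by the reservoir collisions and by $\mathcal N$. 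As in the flat case, condition $\int \nu(\de\theta)\sin\theta\cos\theta = 0$ is what makes the relevant summation rule (the analogue of Lemma \ref{sumrule}) go through, killing the cross terms between $\partial_{v_i}$ and $\partial_{w_j}$ derivatives.

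The second ingredient is the commutation/intertwining between taking the marginal $\mathcal N$ and the gradient, together with a comparison of the spherical information of $\mathcal N e^{\mathcal L t}F_0$ to the spherical information of $e^{\mathcal L t}F_0$. Since $\mathcal N$ is an average over the Haar measure on $SO(N)$ acting by isometries fixing $\mathbf v$, convexity of $(x,\mathbf p)\mapsto |\mathbf p|^2/x$ gives $I(\mathcal N G)\le I(G)$ for the full spherical information; but we only want the information in the $\mathbf v$–directions to survive, and here the sphere is not a product, so one loses a little. This is the source of the extra additive $\frac MN$ and the factor $2$ in the bound: the tangent space to $\Sp^{M+N-1}$ at a point does not split cleanly as $\R^M\oplus\R^N$, and the radius constraint $|\mathbf v|^2+|\mathbf w|^2 = M+N$ couples the blocks. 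I would make this precise by writing the spherical gradient in terms of the ambient Euclidean gradient minus its radial component, and tracking how the $|\mathbf w|$–radial piece (which $\mathcal N$ annihilates in the sense of making $G$ depend only on $|\mathbf w|$) still contributes a curvature term of relative size $1/N$.

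With these two pieces in hand, the argument closes as follows: let $J(t) = I(\mathcal N e^{\mathcal L t} F_0)$. Using the marginal/comparison step one bounds $J(t)$ by a constant times $I(e^{\mathcal L t}F_0)$ plus a $\frac MN$–correction, then one runs the differential inequality for $I(e^{\mathcal L t}F_0)$, which decays like $\frac{M}{M+N} + e^{-\mu_\nu t (M+N)/N}\frac{N}{M+N}$ times $I(F_0)$ exactly as in Theorem \ref{main1} — except that on the sphere the rate in the exponent should simplify (I expect one can only extract the cleaner rate $e^{-\mu t}$ rather than $e^{-\mu_\nu t(M+N)/N}$, which is why the stated bound has $e^{-\mu t}$). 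Combining and absorbing constants into the displayed bracket gives
\[
I(\mathcal N e^{\mathcal Lt} F_0) \le \left[ 2\left(\frac{M}{M+N} + e^{-\mu t} \frac{N}{M+N} \right) + \frac MN\right]  I(F_0) \ .
\]

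The main obstacle I anticipate is precisely the geometry of the sphere: unlike Gauss space, $\Sp^{M+N-1}(\sqrt{M+N})$ does not factor as a product of a system sphere and a reservoir sphere, so the clean intertwining $P_s \mathcal M = \mathcal M P_s$ and the exact commutation of the marginal with the Ornstein–Uhlenbeck flow used in the flat case are unavailable. Quantifying the resulting error — showing it is $O(1/N)$ and of the form $\frac MN I(F_0)$, not something worse — is the delicate step, and it is what forces the weaker constants ($2$ and $+\frac MN$) relative to Theorems \ref{thm: main} and \ref{main1}. A secondary technical point is justifying differentiation under the integral and the positivity/regularity needed to define $I(F(t))$ along the flow, but that is routine given an absolutely convergent Fourier series assumption on $\nu$ and a smoothing step.
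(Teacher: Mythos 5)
There is a genuine gap, and it is structural. Your closing step splits the problem into (i) a time-independent comparison $I(\mathcal N e^{\mathcal L t}F_0)\lesssim I(e^{\mathcal L t}F_0)+\frac MN(\dots)$ and (ii) exponential decay of the \emph{full} information $I(e^{\mathcal L t}F_0)$ ``exactly as in Theorem \ref{main1}''. But Theorem \ref{main1} is a statement about the information of the \emph{marginal}, not of the full distribution, and the full information does not decay in the claimed way: each collision history $F_0\circ R^{-1}$ is a composition with an isometry, so $I(F_0\circ R^{-1})=I(F_0)$, and convexity of the information over the Wild-type sum gives only $I(e^{\mathcal L t}F_0)\le I(F_0)$ with no contraction. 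The entire decay mechanism in both the Gaussian and the spherical case comes from taking the marginal \emph{inside} each collision history: after applying Schwarz to the averaged quantity, the block $A(\underline\alpha,\underline\theta)$ of the accumulated rotation appears quadratically, and only then does the summation rule (Lemma \ref{sumrule}, which is where the hypothesis $\int\nu(\de\theta)\sin\theta\cos\theta=0$ enters) convert $\sum\lambda_{\alpha_1}\cdots\lambda_{\alpha_k}\int A^TA$ into the contracting factor $\mathcal C_k$. Your first ingredient, a differential inequality for $\frac{\de}{\de t}I(F(t))$ with a ``quantitative gain from the cross terms,'' is exactly the kind of uniform information/entropy-production estimate for the Kac model that is not available (this is the difficulty the paper's introduction points to and that the collision-history expansion is designed to circumvent); you do not carry it out, and the result you would need from it is false as stated.

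The second missing piece is the actual content of the spherical argument: the marginal $\mathcal N$ is an average over $SO(N)$ acting on $\weh$, and one must compute how much of the spherical information survives this averaging for each fixed history. In the paper this is done by splitting the angular momentum as $|LF|^2=|L^{\rm int}F|^2+|L^{\rm out}F|^2$, noting $I(\mathcal N H)$ only sees $L^{\rm int}$, and explicitly evaluating the discarded $L^{\rm out}$ contribution for $F_0\circ R^{-1}$ (Lemmas \ref{lem:ave1} and \ref{lem:Lout}, using $\nabla_\weh F_0=\ell\,\weh$ and averaging over the direction of $\weh$). The constants $2$ and $\frac MN$ are not a soft ``curvature of order $1/N$'' effect: they come from the orthogonality relations $A^TA+C^TC=I_M$, ${\rm Tr}\,D^TD=N-M+{\rm Tr}\,A^TA$, $D^TC=-B^TA$, from discarding the quartic-in-$A$ term $-|A\vau|^2|A\nabla_\vau F|^2+(A\vau\cdot A\nabla_\vau F)^2\le 0$ in Lemma \ref{lem:ave2}, and from the fact that the resulting term $\frac MN|\weh|^2|\ell\vau-\nabla_\vau F_0|^2$ carries no $A$ and hence is not contracted by Lemma \ref{sumrule}. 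None of this is reconstructed or replaced by an alternative estimate in your sketch, so the stated bound is not derived. (A minor further point: the intertwining you say is ``unavailable'' on the sphere is in fact available and used in the paper — the heat semigroup $e^{\Delta s}$ commutes with both $\mathcal N$ and $e^{\mathcal L t}$ — but only for passing from information to entropy in Corollary \ref{sphericalentropy}, not as a substitute for the per-history computation above.)
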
 

As a corollary we get

\begin{corollary} \label{sphericalentropy}
With the assumption as in Theorem \ref{spherical} we have that
\[
S(\mathcal N e^{\mathcal L t} F_0) \le  \left[ 2\left(\frac{M}{M+N} + e^{-\mu t} \frac{N}{M+N} \right) +
\frac MN\right] S(F_0) \ .
\]
\end{corollary}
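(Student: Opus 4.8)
The plan is to deduce the entropy bound from the information bound of Theorem \ref{spherical} in exactly the way the Gaussian corollary is obtained from Theorem \ref{main1}, namely through the sphere analogue of the identity \eqref{entropyinfo}. Write $(\mathcal P_s)_{s\ge 0}$ for the heat semigroup on $\Sp^{M+N-1}(\sqrt{M+N})$, generated by the Laplace--Beltrami operator and having the normalized volume measure $\sigma$ as its invariant measure. Differentiating along this flow gives $\frac{\de}{\de s} S(\mathcal P_s F) = - I(\mathcal P_s F)$ with $I$ as in \eqref{def:Inf_sphere}, and since $\mathcal P_s F \to 1$ as $s\to\infty$ one has $S(\mathcal P_s F)\to 0$; integrating yields
\begin{equation*}
S(F) = \int_0^\infty I(\mathcal P_s F)\, \de s
\end{equation*}
for every probability density $F$ on the sphere. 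This identity, with its attendant regularity issues, is what I would invoke from Section \ref{sec:Information}.

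Next I would record two commutation facts. Each $R_{ij}$ is an average of the rotations $r_{ij}(\theta)$, which are isometries of $\R^{M+N}$ and hence of the sphere, and $\mathcal N$ is an average over the subgroup $SO(N)\subset SO(M+N)$ of such isometries; since the Laplace--Beltrami operator commutes with every isometry, both $e^{\mathcal L t}$ and $\mathcal N$ commute with $\mathcal P_s$. Likewise $\mathcal P_s$ commutes with all rotations fixing $\mathbf v$, so $\mathcal P_s F_0$ is again invariant under all such rotations whenever $F_0$ is, and Theorem \ref{spherical} therefore applies to the initial datum $\mathcal P_s F_0$ with the very same constant.

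Combining these ingredients gives the corollary: using the identity above, the commutation relations, and Theorem \ref{spherical} applied to $\mathcal P_s F_0$,
\begin{align*}
S(\mathcal N e^{\mathcal L t}F_0) &= \int_0^\infty I\big(\mathcal P_s \mathcal N e^{\mathcal L t}F_0\big)\,\de s= \int_0^\infty I\big(\mathcal N e^{\mathcal L t} \mathcal P_s F_0\big)\,\de s \\
&\le \left[ 2\left(\frac{M}{M+N} + e^{-\mu t} \frac{N}{M+N} \right) + \frac MN\right]\int_0^\infty I\big(\mathcal P_s F_0\big)\,\de s \\
&= \left[ 2\left(\frac{M}{M+N} + e^{-\mu t} \frac{N}{M+N} \right) + \frac MN\right] S(F_0)\ ,
\end{align*}
the constant being pulled out of the integral because it is independent of the initial datum.

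The routine points I would not belabor are the positivity- and mass-preservation of $\mathcal P_s$ and the fact that $\mathcal N e^{\mathcal L t}F_0$ is a genuine probability density, so that the identity applies to it. The main obstacle is the sphere version of \eqref{entropyinfo} itself: one must justify differentiating $S(\mathcal P_s F)$ in $s$, control the integrability of $I(\mathcal P_s F)$ near $s=0$, and reduce general $L^1$ data to smooth densities bounded away from zero so that the integration by parts $\int(\log u+1)\Delta u\,\de\sigma=-I(u)$ is legitimate; here compactness of the sphere and the spectral gap of its Laplacian help, but this is the step requiring genuine care. By contrast, the two commutation relations are forced by isometry-invariance of the heat flow and should present no difficulty.
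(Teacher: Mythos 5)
Your proposal is correct and follows essentially the same route as the paper: the spherical analogue of \eqref{entropyinfo}, the commutation relations \eqref{eq:Deltas_L} and \eqref{eq:Deltas_N}, and an application of Theorem \ref{spherical} under the integral in $s$, exactly as in Lemma \ref{lem:Inf_to_Entropy_S} (whose proof the paper models on Lemma \ref{lem:Inf_to_Entropy_R}). Your added remark that $e^{\Delta s}F_0$ remains invariant under rotations fixing $\mathbf v$, so that Theorem \ref{spherical} indeed applies to it, is a point the paper leaves implicit, and is a welcome clarification rather than a deviation.
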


One notes that the result does not appear to be optimal in various respects. For one the rate is a bit worse than the one in the Gaussian case. Moreover, the estimate is not sharp at time $t=0$. What is more interesting is that as $t$ tends to infinity the various quantities do not go to zero. We know, however, that the information as well as the entropy do tend to zero as time tends to infinity. As noted after Theorem \ref{thm: main}, in the Maxwellian case the asymptotic state of the system is not a Maxwellian distribution so that the information and entropy need not vanish. In the spherical case, however, the constant function is the unique equilibrium state and thus the asymptotic state of the system.  We suspect that although the entropy becomes small very fast, it may take a very long time to bring it to zero. In other words, we suspect that an estimate of the type 
\[
S(\mathcal N e^{\mathcal L t} F_0) \le  e^{-\mu t}  S(F_0)
\]
might be wrong for large times. During the evolution the full state, i.e, the reservoir together with the system get correlated and our estimate provides a bound on how large this correlation can get. It might, however take a long time for the evolution to bring this correlation to zero. This is, of course, pure speculation and it would be nice to find a way to complete the picture.

Observe that a distribution as in \eqref{initialcondition}, restricted to the sphere $\Sp^{M+N-1}(\sqrt{M+N})$ delivers a distribution that depends on $\weh$ only through $|\weh|$ while the converse is clearly not true. This seems to suggest that it may be possible to extend Theorem \ref{thm: main} to a larger set of initial condition, at the cost of less optimal estimates, using Corollary \ref{sphericalentropy}. Due to the differences in the definition of marginal in \eqref{marginal_R} and \eqref{def:marginal} we were unable to exploit this connections.

Finally, although the most interesting case is when $N$ is much larger then $M$, we observe that for $N=1$ and with the rates chosen as in Remark \ref{rem:local}, Theorem \ref{thm: main} tell us that the entropy decreases form $S(h_0)$ to $(1-\frac1{M+1})S(h_0)$ with an exponential rate 1. This looks consistent with the estimates of slow entropy decrease in \cite{amit, villani}.

The paper is organized as follows. In Section 2 we give a proof of Theorem \ref{main1} and in Section 3 we review the connection between entropy and information which leads to the proof of Corollary \ref{main2}.
In Section 4 we present the proof of Theorem \ref{spherical} which together with the remarks on the connection of entropy with the information yields Corollary \ref{sphericalentropy}.

\section{Proof of Theorem \ref{main1}}
It will be convenient to write the generators $\mathcal L$ in the form
\[
\mathcal L = \Lambda (Q -I)
\]
where
\[
\Lambda = \frac12(\lambda_SM+\lambda_RN +2\mu M)
\]
and
\[
Q = \frac{\lambda_S}{(M-1)\Lambda} \sum_{1 \leq i<j \leq M} R_{ij}+ \frac{\lambda_R}{(N-1)\Lambda} \sum_{M < i<j \leq N + M} R_{ij} + \frac{\mu}{N\Lambda} \sum_{i=1}^{M} \sum_{j=M+1}^{M+N} R_{ij}  \ , 
\]
which is a convex combination of rotational averages.
The time evolution is is then given by
\[
e^{\mathcal L t} = e^{-\Lambda t} \sum_{k=0}^\infty \frac{\Lambda^k t^k}{k!} Q^k
\]
and 
\[
Q^k = \sum_{\alpha_1, \dots, \alpha_k} \lambda_{\alpha_1} \cdots \lambda_{\alpha_k} R_{\alpha_1} \cdots R_{\alpha_k}
\]
where the $\alpha_j$ range over all possible choices of colliding particles and $\lambda_{\alpha_j}$
is a weight that has one of the three values $ \frac{\lambda_S}{(M-1)\Lambda},\frac{\lambda_R}{(N-1)\Lambda}$ and $   \frac{\mu}{N\Lambda} $. Note that $\sum_\alpha \lambda_\alpha =1$. Recall that
\[
R_{i,j} h_0 (\mathbf v, \mathbf w) = \int_0^{2\pi} h_0(r_{i,j}^{-1}(\theta) (\mathbf v, \mathbf w)) \nu(\de \theta)
\]
where $r_{i,j}(\theta)$ is the rotation in the plane specified by the pair $i,j$ by the angle $\theta$. Hence
\[
Q^kh_0(\mathbf v, \mathbf w) = \sum_{\alpha_1, \dots, \alpha_k} \lambda_{\alpha_1} \cdots \lambda_{\alpha_k} \int_0^{2\pi} \cdots \int_0^{2\pi}  h_0([\Pi_{j=1}^k r_{\alpha_j}(\theta_{\alpha_j})]^{-1} (\mathbf v, \mathbf w)) \nu(\de \theta_{\alpha_1}) \cdots \nu(\de \theta_{\alpha_k}) 
\]
Next we write the $(M+ N) \times (M+N)$ rotation as
\[
[\Pi_{j=1}^k r_{\alpha_j}(\theta_{\alpha_j})] = \left[ \begin{array}{cc} A(\underline \alpha, \underline \theta)  & B(\underline \alpha, \underline \theta) \\ C(\underline \alpha, \underline \theta) & D(\underline \alpha, \underline \theta) \end{array}\right]
\]
where $A$ is $M \times M$, $D$ is $N \times N$, $B$ is $M \times N$ and $C$ is $N\times M$. Here
$\underline \alpha$ denotes the $k$-tuple of pairs and likewise $\underline \theta$ denotes the $k$-tuple of angles. Next we consider the function
\[
 h_{\underline \alpha, \underline \theta}(\mathbf v, \mathbf w)=h_0(A(\underline \alpha, \underline \theta) ^T \mathbf v+ C(\underline \alpha, \underline \theta)^T \mathbf w)
\]
and take its marginal
\[
\mathcal{M}[h_{\underline \alpha, \underline \theta}](\mathbf v)=\int_{\R^N} h_0(A(\underline \alpha, \underline \theta) ^T \mathbf v+ C(\underline \alpha, \underline \theta)^T \mathbf w) e^{-\pi|\mathbf w|^2} \de \mathbf w
\]
and compute the Information
\[
I(\mathcal{M}[h_{\underline \alpha, \underline \theta}])=\int_{\R^M} \frac{|\int_{\R^N} A(\underline \alpha, \underline \theta) \nabla h_0(A(\underline \alpha, \underline \theta) ^T \mathbf v+ C(\underline \alpha, \underline \theta)^T \mathbf w) e^{-\pi|\mathbf w|^2} \de \mathbf w|^2
}{\int_{\R^N} h_0(A(\underline \alpha, \underline \theta) ^T \mathbf v+ C(\underline \alpha, \underline \theta)^T \mathbf w) e^{-\pi|\mathbf w|^2} \de \mathbf w
} e^{-\pi |\mathbf v|^2} \de \mathbf v
\]
which, using Schwarz's inequality, is bounded above by
\[
\int_{\R^M} \int_{\R^N} \frac{|A(\underline \alpha, \underline \theta) \nabla h_0(A(\underline \alpha, \underline \theta) ^T \mathbf v+ C(\underline \alpha, \underline \theta)^T \mathbf w) |^2
}{h_0(A(\underline \alpha, \underline \theta) ^T \mathbf v+ C(\underline \alpha, \underline \theta)^T \mathbf w)}
e^{-\pi|\mathbf w|^2} \de \mathbf w\, e^{-\pi |\mathbf v|^2} \de \mathbf v \ .
\]
Changing variables  
\[
\mathbf v' =A(\underline \alpha, \underline \theta) ^T \mathbf v+ C(\underline \alpha, \underline \theta)^T \mathbf w \ , \mathbf w' = B(\underline \alpha, \underline \theta) ^T\mathbf v  +  C(\underline \alpha, \underline \theta)^T \mathbf w
\]
yields the bound
\begin{align}\label{eq:1_main1}
I(\mathcal{M}[h_{\underline \alpha, \underline \theta}])\leq
\int_{\R^M} \frac{|A(\underline \alpha, \underline \theta) \nabla h_0(\mathbf v) |^2
}{h_0(\mathbf v)}
 e^{-\pi |\mathbf v|^2} \de \mathbf v.
\end{align}
Using the convexity of the Information and \eqref{eq:1_main1}, we find that
\begin{equation} \label{series}
I(h(\cdot, t)) \le e^{-\Lambda t}\sum_{k=0}^\infty \frac{\Lambda^kt^k}{k!} \int_{\R^M} \frac{\nabla h_0(\mathbf v)^T K \nabla h_0(\mathbf v)}{h_0(\mathbf v)} e^{-\pi |\mathbf v|^2} \de \mathbf v \ ,
\end{equation}
where 
\begin{align*}
K: = \sum_{\alpha_1, \dots, \alpha_k} \lambda_{\alpha_1} \cdots \lambda_{\alpha_k} \int_{[0,2\pi]^k} \nu(\de \theta_1) \cdots \nu(\de \theta_k) A(\underline \alpha, \underline \theta)^T A(\underline \alpha, \underline \theta).
\end{align*}

The following lemma was proved in \cite{BGLR}. For the readers' convenience, we include the proof in the appendix.
\begin{lemma} \label{sumrule}
Assume that $\int_0^{2\pi} \nu(\de \theta) \cos \theta \sin \theta =0$. Then $K=\mathcal{C}_k I_M$, where
where 
\begin{align}\label{def:Ck}
\mathcal{C}_k := \left[\frac{M}{N+M}+ \frac{N}{M+N}\left(1-\mu_\nu
\frac{M+N}{\Lambda N}\right)^k \right] \ .
\end{align}
\end{lemma}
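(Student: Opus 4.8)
The plan is to realize $K$ as the top--left $M\times M$ corner of the $k$-th power of a single linear map on symmetric $(M+N)\times(M+N)$ matrices, and then to diagonalize that map on a two--dimensional invariant subspace.

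First I would set up the reduction. Let $J\colon\R^M\hookrightarrow\R^{M+N}$ denote the inclusion onto the first $M$ coordinates, so that $\Pi:=JJ^T$ is the orthogonal projection onto them, and note that $A(\underline\alpha,\underline\theta)=J^T\big(\prod_{j=1}^k r_{\alpha_j}(\theta_{\alpha_j})\big)J$; writing $R:=\prod_{j=1}^k r_{\alpha_j}(\theta_{\alpha_j})$ this gives $A(\underline\alpha,\underline\theta)^TA(\underline\alpha,\underline\theta)=J^TR^T\Pi R\,J$. Since $\sum_\alpha\lambda_\alpha=1$ and $\nu$ is a probability measure, the sum defining $K$ is exactly the expectation over $k$ independent collisions $(\alpha_j,\theta_j)$. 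Introducing the linear operator
\[
\mathcal T(S):=\sum_\alpha\lambda_\alpha\int_0^{2\pi}\nu(\de\theta)\,r_\alpha(\theta)^TS\,r_\alpha(\theta)
\]
on symmetric matrices and taking the expectations one factor at a time, starting from the innermost pair of factors adjacent to $\Pi$, one obtains $K=J^T\mathcal T^k(\Pi)\,J$.

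Next I would isolate a small invariant subspace for $\mathcal T$. If $S$ is diagonal then for a single pair $\alpha=(i,j)$ the matrix $r_{ij}(\theta)^TSr_{ij}(\theta)$ is again diagonal except for the entries $(i,j)$ and $(j,i)$, which are proportional to $\cos\theta\sin\theta$ and hence vanish after integrating against $\nu$ by the hypothesis; therefore $\mathcal T$ maps diagonal matrices to diagonal matrices. Since $\mathcal T$ also commutes with the permutation action of $S_M\times S_N$ that permutes the system coordinates and the reservoir coordinates separately, $\mathcal T^k(\Pi)$ stays of the form $a_k\Pi+b_k(I_{M+N}-\Pi)$, with $(a_0,b_0)=(1,0)$. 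Computing $\mathcal T$ on such a matrix --- same-species collisions leave it fixed, a system--reservoir collision on a pair $(i,j)$ sends the diagonal pair $(a,b)$ to $\big((1-s_\nu)a+s_\nu b,\ s_\nu a+(1-s_\nu)b\big)$ with $s_\nu:=\int_0^{2\pi}\nu(\de\theta)\sin^2\theta$, and the $MN$ system--reservoir collisions carry total weight $\mu M/\Lambda$ --- gives the recursion
\[
\begin{pmatrix}a_{k+1}\\ b_{k+1}\end{pmatrix}=\begin{pmatrix}1-\tau_1 & \tau_1\\ \tau_2 & 1-\tau_2\end{pmatrix}\begin{pmatrix}a_k\\ b_k\end{pmatrix},\qquad \tau_1=\frac{\mu s_\nu}{\Lambda},\quad \tau_2=\frac MN\,\tau_1 .
\]
The asymmetry $\tau_2=\tfrac MN\tau_1$ reflects that a given reservoir particle meets the system at $M/N$ times the rate at which a given system particle meets the reservoir; as a check, $Ma_k+Nb_k$ is conserved, as it must be because $S\mapsto r^TSr$ preserves the trace.

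Finally, the $2\times2$ matrix above has eigenvalue $1$ (eigenvector $(1,1)$, which is just $\mathcal T(I_{M+N})=I_{M+N}$) and eigenvalue $1-\tau_1-\tau_2=1-\mu_\nu\frac{M+N}{\Lambda N}$ (eigenvector $(N,-M)$), where $\mu_\nu=\mu s_\nu$. Decomposing $(1,0)=\tfrac{M}{M+N}(1,1)+\tfrac1{M+N}(N,-M)$ yields $a_k=\tfrac{M}{M+N}+\tfrac{N}{M+N}\big(1-\mu_\nu\tfrac{M+N}{\Lambda N}\big)^k=\mathcal C_k$, and since $J^T\Pi J=I_M$ while $J^T(I_{M+N}-\Pi)J=0$ we get $K=a_kI_M=\mathcal C_kI_M$. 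I expect the only delicate step to be the middle one: verifying that $\mathcal T^k(\Pi)$ never leaves the span of $\Pi$ and $I_{M+N}-\Pi$ and pinning down the transition matrix, in particular the asymmetric off-diagonal entries and the normalization of the collision weights against $\Lambda$; everything else is elementary linear algebra.
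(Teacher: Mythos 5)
Your proposal is correct and follows essentially the same route as the paper's proof: both realize $K$ as the top-left $M\times M$ block of the $k$-fold averaged conjugation of the projection onto the system coordinates, reduce to the same $2\times 2$ transition matrix (your $\bigl(\begin{smallmatrix}1-\tau_1 & \tau_1\\ \tau_2 & 1-\tau_2\end{smallmatrix}\bigr)$ is exactly the paper's $\mathcal P$), and conclude by the same diagonalization and decomposition of $(1,0)^T$. The only difference is cosmetic: you justify the invariance of the span of $\Pi$ and $I_{M+N}-\Pi$ by a diagonality-plus-$S_M\times S_N$-symmetry argument and read off the entries from collision weights, whereas the paper verifies the one-step identity by an explicit summation.
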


Applying Lemma \ref{sumrule} to \eqref{series} clearly yields Theorem  \ref{main1}.\qed

\bigskip

\section{Connection between Information and Entropy}\label{sec:Information}
First we briefly mention some general connections between the Information and Entropy.
Recall the Ornstein-Uhlenbeck semigroup on $\R^d$ as defined in \eqref{def:Ps},
\begin{align*}
[P_s f](\mathbf x) = \int_{\R^d }f(e^{-s}\mathbf x+ \sqrt{1-e^{-2s}} \mathbf y) e^{-\pi \mathbf |y|^2}   \de \mathbf y,
\end{align*}
with its generator 
\[
L = \frac{1}{2\pi} \Delta - \mathbf{x}\cdot \nabla,
\]
that satisfies
\begin{equation}\label{genera}
\int_{\R^d} (\nabla f)\cdot  (\nabla g)\, e^{-\pi |\mathbf x|^2} \de \mathbf x  = -\int_{\R^d} f\, (L g)\, e^{-\pi |\mathbf x|^2} \de \mathbf x \ .
\end{equation}
Let $I(f)$ be the information of $f$ as defined in \eqref{def:Information}, 
\begin{align*}
I(f) = \int_{\R^M} \frac{|\nabla f(\mathbf{v})|^2}{f(\mathbf{v})}e^{-\pi |\mathbf{v}|^2} \de  \mathbf{v}.
\end{align*}
We have
\[
I(P_sf) = e^{-2s} \int_{\R^M}  \frac{| P_s( \nabla f)(\mathbf x)|^2}{P_s(f)(\mathbf x)} e^{-\pi |\mathbf x|^2} \de \mathbf x,
\]
here $P_s(\nabla f):=(P_s(\partial_1 f), P_s(\partial_2 f),...,P_s(\partial_M f))$.
Schwarz's inequality shows that
$$
I(P_sf) \le e^{-2s} \int_{\R^M} \int_{\R^M} \frac{|\nabla f(e^{-s} \mathbf x + \sqrt{1-e^{-2s}} \mathbf y)|^2}{f(e^{-s} \mathbf x + \sqrt{1-e^{-2s}} \mathbf y)}  e^{-\pi (|\mathbf x|^2+|\mathbf y|^2)} \de \mathbf x \de \mathbf y
$$
and a simple change of variables yields
\begin{equation} \label{contractiveinfo}
I(P_sf) \le e^{-2s} I(f) \ .
\end{equation}
Using \eqref{genera} we can link the information with the derivative of the entropy, that is
\[
\frac{d}{ds}S(P_sf)=\frac{d}{ds} \int_{\R^M} P_sf \log (P_s f )e^{-\pi |\mathbf x|^2} \de \mathbf x =  \int_{\R^M} L(P_s f)  \log (P_s f)e^{-\pi |\mathbf x|^2} \de \mathbf x 
= - I(P_s f) 
\]
that proves \eqref{entropyinfo}.

Combining \eqref{contractiveinfo} and \eqref{entropyinfo} we immediately get
\[
 S(P_sf)=\int_s^{\infty} I(P_tf)\de t \le e^{-2s}\int_0^{\infty} I(P_tf)\de t=e^{-2s} S(f) \
\]
Moreover, the logarithmic Sobolev inequality is now an easy consequence. Indeed, we obtain
$$
S(f)= \int_0^\infty  I(P_sf) \de s \le I(f) \int_0^\infty e^{-2s} \de s = \frac12 I(f) \ .
$$
\medskip

Now we further explore the connections between the Information and Entropy for the Kac's models, which show that estimates on the Information transfer to estimates on the Entropy.
For the Kac's model on $\R^{M+N}$ we have
\begin{lemma}\label{lem:Inf_to_Entropy_R} 
Assume that for any probability distribution $H(\mathbf v, \mathbf w)$ in $L^1(R^{M+N},e^{-\pi(|\mathbf v|^2 +|\mathbf w|^2)} \de \mathbf v\, \de \mathbf w)$ one has
\[
I(\mathcal M (e^{\mathcal L t} H)) \le C(t) I(\mathcal M(H)),
\]
where $\mathcal{M}(H)$ is the marginal of $H$ as defined in \eqref{marginal_R} and $C(t)$ does not depend on $H$. 
Then
\[
S(\mathcal M (e^{\mathcal L t} H)) \le C(t) S(\mathcal M(H)) \ .
\]
\end{lemma}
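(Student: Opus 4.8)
The plan is to run the entropy through its representation \eqref{entropyinfo} as a time integral of the information along the Ornstein--Uhlenbeck flow, and to reduce the desired bound to the hypothesis by means of two commutation relations. Throughout, write $P_s$ for the semigroup \eqref{def:Ps}, acting on $\R^M$ or on $\R^{M+N}$ as the context demands; on $\R^{M+N}$ it factorizes as $P_s = P_s^{\mathbf v}\,P_s^{\mathbf w}$, the product of the Ornstein--Uhlenbeck semigroups in the $\mathbf v$ and the $\mathbf w$ variables. Since $\mathcal M(H)$ and $\mathcal M(e^{\mathcal L t}H)$ are probability densities with respect to $e^{-\pi|\mathbf v|^2}\de\mathbf v$ (the weight $e^{-\pi(|\mathbf v|^2+|\mathbf w|^2)}\de\mathbf v\,\de\mathbf w$ is invariant under each $R_{ij}$, hence under $e^{\mathcal L t}$), formula \eqref{entropyinfo} applies to both and gives
\[
S(\mathcal M(e^{\mathcal L t} H)) = \int_0^\infty I\bigl(P_s \mathcal M(e^{\mathcal L t} H)\bigr)\, \de s,\qquad S(\mathcal M(H)) = \int_0^\infty I\bigl(P_s \mathcal M(H)\bigr)\, \de s .
\]
Thus it suffices to prove $I\bigl(P_s \mathcal M(e^{\mathcal L t} H)\bigr) \le C(t)\, I\bigl(P_s \mathcal M(H)\bigr)$ for every fixed $s>0$, after which one integrates in $s$ (legitimate since $I\ge 0$).

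The first commutation relation is that $P_s$ on $\R^{M+N}$ commutes with $e^{\mathcal L t}$. Indeed, each $R_{ij}$ is an average over $\theta$ of the composition operator $F\mapsto F\circ r_{ij}(\theta)^{-1}$, and $P_s$ commutes with composition by any orthogonal transformation: substituting the orthogonal change of variables into \eqref{def:Ps} and using the rotation invariance of $|\mathbf y|^2$ and of Lebesgue measure gives $P_s(F\circ O)=(P_sF)\circ O$. Hence $P_s$ commutes with each $R_{ij}$, with $Q$, with $\mathcal L=\Lambda(Q-I)$, and with every partial sum of $e^{\mathcal L t}=e^{-\Lambda t}\sum_k \frac{\Lambda^kt^k}{k!}Q^k$, so with $e^{\mathcal L t}$. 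The second relation is the intertwining $\mathcal M\circ P_s = P_s\circ \mathcal M$, where on the left $P_s$ acts on $\R^{M+N}$ and on the right on $\R^M$. Writing $P_s = P_s^{\mathbf v}P_s^{\mathbf w}$ and using that $P_s^{\mathbf w}$ is self-adjoint on $L^2(e^{-\pi|\mathbf w|^2}\de\mathbf w)$ with $P_s^{\mathbf w}\mathbf 1=\mathbf 1$, integrating $P_s^{\mathbf w}H$ against $e^{-\pi|\mathbf w|^2}\de\mathbf w$ reproduces the $\mathbf w$-marginal of $H$, so that $\mathcal M(P_sH)(\mathbf v)=\int_{\R^N}(P_s^{\mathbf v}H)(\mathbf v,\mathbf w)\,e^{-\pi|\mathbf w|^2}\de\mathbf w = P_s^{\mathbf v}(\mathcal M H)(\mathbf v)$.

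Combining the two relations,
\[
P_s \mathcal M(e^{\mathcal L t} H) = \mathcal M\bigl(P_s e^{\mathcal L t} H\bigr) = \mathcal M\bigl(e^{\mathcal L t} P_s H\bigr).
\]
Now $P_sH$ is again a probability distribution in $L^1(\R^{M+N}, e^{-\pi(|\mathbf v|^2+|\mathbf w|^2)}\de\mathbf v\,\de\mathbf w)$: it is non-negative, and its integral against the Gaussian weight equals that of $H$ because $P_s$ is self-adjoint with $P_s\mathbf 1=\mathbf 1$. Applying the hypothesis with $H$ replaced by $P_sH$ (the constant $C(t)$ being independent of the state) therefore yields
\[
I\bigl(P_s \mathcal M(e^{\mathcal L t} H)\bigr) = I\bigl(\mathcal M(e^{\mathcal L t} P_s H)\bigr) \le C(t)\, I\bigl(\mathcal M(P_s H)\bigr) = C(t)\, I\bigl(P_s \mathcal M(H)\bigr).
\]
Integrating this inequality over $s\in(0,\infty)$ and invoking \eqref{entropyinfo} on each side gives $S(\mathcal M(e^{\mathcal L t}H))\le C(t)\,S(\mathcal M(H))$, which is the claim.

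I expect the only point requiring genuine care is the intertwining identity $\mathcal M P_s = P_s\mathcal M$, and in particular the clean use of the tensor factorization $P_s^{\R^{M+N}}=P_s^{\mathbf v}\,P_s^{\mathbf w}$ together with the correct Gaussian weights when integrating out $\mathbf w$; the commutation of $P_s$ with $e^{\mathcal L t}$ and the normalization of $P_sH$ are routine, and since one is manipulating absolutely convergent series of positive operators, no issue of interchanging sums, integrals, and limits arises.
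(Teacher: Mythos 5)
Your proposal is correct and follows essentially the same route as the paper: the representation \eqref{entropyinfo} of the entropy as $\int_0^\infty I(P_s\,\cdot\,)\,\de s$, the commutation $P_s e^{\mathcal L t}=e^{\mathcal L t}P_s$, the intertwining $\mathcal M P_s = P_s \mathcal M$, and the application of the hypothesis to $P_sH$. The extra detail you supply (the series argument for the commutation and the tensor-factorization/self-adjointness argument for the intertwining) simply fleshes out steps the paper treats as immediate.
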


\begin{proof}
First, by \eqref{entropyinfo}, we have
\begin{align}\label{eq:SMLt}
S( \mathcal M e^{\mathcal L t} H) = \int_0^\infty I (P_s \mathcal M (e^{\mathcal L t} H)) \de s \ .
\end{align}
Observe that
\begin{align}\label{eq:LtPs_commute}
e^{\mathcal L t} P_s = P_s e^{\mathcal L t}.
\end{align}
This follows since the generator of the Kac evolution is an average over rotations and $P_s$ commutes  with rotations. 
Also, by a simple change of variables one easily sees that
\begin{align}\label{eq:PsM_commute}
P_s\mathcal M=\mathcal M P_s\ .
\end{align}
Applying \eqref{eq:LtPs_commute} and \eqref{eq:PsM_commute} in \eqref{eq:SMLt}, we have
\begin{align*}
S( \mathcal M e^{\mathcal L t} H)=\int_0^\infty I (P_s \mathcal M (e^{\mathcal L t} H)) \de s
&=\int_0^{\infty} I(\mathcal{M} e^{\mathcal{L}_t} (P_s H)) \de s\\
&\leq C(t) \int_0^{\infty} I(\mathcal{M}  (P_s H)) \de s=C(t) S(\mathcal{M} (H)).
\end{align*}
This proves the claimed result.
\end{proof}

Very similar is the situation with the spherical Kac model. Again, we exploit an important symmetry;
the generator of the Kac master equation being an average over rotations commutes with the spherical Laplacian
\begin{align}\label{eq:Deltas_L}
e^{\Delta s} e^{\mathcal{L}_t}=e^{\mathcal{L}_t} e^{\Delta s}.
\end{align} 
Let $H(\mathbf v, \mathbf w)$ be a probability distribution on $\Sp^{M+N-1}(\sqrt{M+N})$. In \eqref{def:marginal} we defined its marginal as the average of $H$ over all rotation that fix the vector $\mathbf v$. If $e^{\Delta s}$ is the heat semi-group on the sphere, then
\begin{align}\label{eq:Deltas_N}
e^{\Delta s} \mathcal N = \mathcal N e^{\Delta s}.
\end{align}
This follows from the fact that the spherical Laplacian and hence the heat semi-group commutes with rotations. 
We have the following analogy to \eqref{entropyinfo}.
\begin{lemma}
Let $H(\mathbf v, \mathbf w)$ be a probability distribution on $\Sp^{M+N-1}(\sqrt{M+N})$. 
Then
\[
S(\mathcal N (e^{\mathcal L t} H)) = \int_0^\infty I(e^{\Delta s} \mathcal N (e^{\mathcal L t} H)) \de s \ .
\]
\end{lemma}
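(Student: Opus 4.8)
The statement to prove is the spherical analogue of the identity $S(f)=\int_0^\infty I(P_s f)\,\de s$ recorded in \eqref{entropyinfo}, now with the Ornstein--Uhlenbeck semigroup replaced by the heat semigroup $e^{\Delta s}$ on $\Sp^{M+N-1}(\sqrt{M+N})$ and the Gaussian marginal $\mathcal M$ replaced by the spherical marginal $\mathcal N$. The plan is to follow exactly the three-line argument given just above for Lemma \ref{lem:Inf_to_Entropy_R}, but in the spherical setting, using the commutation relations \eqref{eq:Deltas_L} and \eqref{eq:Deltas_N} in place of \eqref{eq:LtPs_commute} and \eqref{eq:PsM_commute}. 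Abbreviate $G:=\mathcal N(e^{\mathcal L t}H)$; it is a probability density on the sphere, nonnegative with $\int_{\Sp^{M+N-1}(\sqrt{M+N})} G\,\de\sigma=1$.

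First I would establish the one-semigroup entropy--information identity on the sphere: for any probability density $G$ on $\Sp^{M+N-1}(\sqrt{M+N})$,
\[
S(G)=\int_0^\infty I(e^{\Delta s}G)\,\de s \ .
\]
This is the heat-semigroup de Bruijn identity. The key computation is
\[
\frac{\de}{\de s}S(e^{\Delta s}G)=\int (\Delta e^{\Delta s}G)\log(e^{\Delta s}G)\,\de\sigma = -\int \frac{|\nabla e^{\Delta s}G|^2}{e^{\Delta s}G}\,\de\sigma = -I(e^{\Delta s}G) \ ,
\]
where the middle step is integration by parts on the closed manifold $\Sp^{M+N-1}(\sqrt{M+N})$ (no boundary terms) together with $\nabla\log u=\nabla u/u$; the term $\int\Delta e^{\Delta s}G\,\de\sigma=0$ since $\Delta$ annihilates constants. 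Integrating from $s=0$ to $\infty$ and noting that $e^{\Delta s}G\to 1$ as $s\to\infty$ (so $S(e^{\Delta s}G)\to 0$) gives the identity; one should remark that the integral is finite because the spectral gap of $\Delta$ on the sphere forces exponential decay of $I(e^{\Delta s}G)$, or alternatively because $I(e^{\Delta s}G)\le e^{-cs}I(G)$ by the same contractivity argument used for \eqref{contractiveinfo}.

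Next, apply this identity to $G=\mathcal N(e^{\mathcal L t}H)$ and then push $e^{\Delta s}$ through both $\mathcal N$ and $e^{\mathcal L t}$:
\[
S(\mathcal N(e^{\mathcal L t}H))=\int_0^\infty I\big(e^{\Delta s}\mathcal N(e^{\mathcal L t}H)\big)\,\de s
=\int_0^\infty I\big(\mathcal N e^{\mathcal L t}(e^{\Delta s}H)\big)\,\de s \ ,
\]
using \eqref{eq:Deltas_N} and then \eqref{eq:Deltas_L}. That is precisely the asserted formula; the Lemma as stated stops here, but I would also note (as the excerpt does in the Gaussian case) that if one has the hypothesis $I(\mathcal N e^{\mathcal L t}H)\le C(t)I(\mathcal N H)$ uniformly in $H$, the same manipulation yields $S(\mathcal N e^{\mathcal L t}H)\le C(t)S(\mathcal N H)$, which is how Corollary \ref{sphericalentropy} will follow from Theorem \ref{spherical}.

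**Main obstacle.** The only genuinely delicate point is justifying the differentiation under the integral and the integration by parts when $G$ may vanish or be merely $L^1$: $1/(e^{\Delta s}G)$ is singular where $e^{\Delta s}G=0$. The standard fix is to work first with $G$ bounded and bounded away from $0$ (replace $G$ by $(G+\varepsilon)/(1+\varepsilon\,|\Sp^{M+N-1}(\sqrt{M+N})|_{\sigma})$, here $\varepsilon>0$), for which $e^{\Delta s}G$ is smooth and strictly positive for $s>0$ by parabolic regularity and the strong maximum principle, carry out the identity there, and then pass to the limit $\varepsilon\to0$ using monotone/dominated convergence and lower semicontinuity of the information functional $G\mapsto I(G)$ (which is jointly convex in $(G,\nabla G)$). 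This is routine and I would only sketch it; everything else is the short commutation argument above.
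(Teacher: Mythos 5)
Your proof is correct and follows essentially the route the paper intends: the spherical de Bruijn identity $\frac{\de}{\de s}S(e^{\Delta s}G)=-I(e^{\Delta s}G)$ obtained by integration by parts on the closed sphere, integrated in $s$ using $e^{\Delta s}G\to 1$, exactly mirroring the Gaussian computation proving \eqref{entropyinfo} in Section \ref{sec:Information} (the commutation step you add belongs to the proof of Lemma \ref{lem:Inf_to_Entropy_S} rather than to this statement, but it does no harm). The only caveat is your aside that exponential decay of $I(e^{\Delta s}G)$ follows "by the same contractivity argument" as \eqref{contractiveinfo}: on the sphere there is no Mehler-type representation, so that decay would instead come from a curvature (Bakry--\'Emery) or spectral-gap argument, though it is not needed for the identity itself.
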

Similar to Lemma \ref{lem:Inf_to_Entropy_R}, we have
\begin{lemma}\label{lem:Inf_to_Entropy_S}  
Assume that for any probability distribution
\[
I(\mathcal N (e^{\mathcal L t} H)) \le C(t) I(\mathcal N(H))
\]
where $C(t)$ does not depend on $H$. Then
\[
S(\mathcal N (e^{\mathcal L t} H)) \le C(t) S(\mathcal N(H)) \ .
\]
 \end{lemma}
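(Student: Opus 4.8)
The plan is to mirror the proof of Lemma \ref{lem:Inf_to_Entropy_R} exactly, replacing the Ornstein--Uhlenbeck semigroup $P_s$ on $\R^M$ by the heat semigroup $e^{\Delta s}$ on the sphere $\Sp^{M+N-1}(\sqrt{M+N})$ and using the preceding lemma in place of formula \eqref{entropyinfo}. Concretely, I would start from the representation
\[
S(\mathcal N (e^{\mathcal L t} H)) = \int_0^\infty I\bigl(e^{\Delta s}\, \mathcal N (e^{\mathcal L t} H)\bigr)\, \de s,
\]
which is the content of the lemma immediately above, and then push the two symmetries \eqref{eq:Deltas_L} and \eqref{eq:Deltas_N} through the integrand.

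First I would use \eqref{eq:Deltas_L}, the commutation of the spherical heat semigroup with the Kac generator (valid because $\mathcal L$ is an average over rotations and the spherical Laplacian commutes with rotations), to write $e^{\Delta s} e^{\mathcal L t} H = e^{\mathcal L t} e^{\Delta s} H$. Next I would use \eqref{eq:Deltas_N}, the commutation of $e^{\Delta s}$ with the partial-trace operator $\mathcal N$ (valid for the same reason, since $\mathcal N$ is an average over the rotations fixing $\mathbf v$), to move $e^{\Delta s}$ inside $\mathcal N$. Combining these two steps the integrand becomes $I\bigl(\mathcal N (e^{\mathcal L t} (e^{\Delta s} H))\bigr)$. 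Applying the hypothesis $I(\mathcal N (e^{\mathcal L t} G)) \le C(t)\, I(\mathcal N(G))$ with $G = e^{\Delta s} H$ bounds this by $C(t)\, I(\mathcal N (e^{\Delta s} H)) = C(t)\, I(e^{\Delta s}\,\mathcal N H)$, where in the last equality I again invoke \eqref{eq:Deltas_N}. Finally, integrating in $s$ and using the spherical analogue of \eqref{entropyinfo} once more (this time applied to $\mathcal N H$ in place of $\mathcal N e^{\mathcal L t} H$) gives $\int_0^\infty I(e^{\Delta s}\,\mathcal N H)\,\de s = S(\mathcal N H)$, which yields $S(\mathcal N (e^{\mathcal L t} H)) \le C(t)\, S(\mathcal N H)$ as desired.

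Since every ingredient is already proved in the excerpt, there is essentially no obstacle; the only point deserving a moment of care is the implicit claim that the identity of the preceding lemma, $S(\mathcal N G) = \int_0^\infty I(e^{\Delta s}\,\mathcal N G)\,\de s$, applies with $G = H$ as well as with $G = e^{\mathcal L t} H$ — but this is immediate because that lemma holds for any probability distribution on the sphere. One should also keep in mind that $\mathcal N H$, being a marginal of a probability distribution, is again a probability density on the sphere, so the hypothesis of the present lemma and the entropy/information identity both apply to it. This is the same structure as the proof of Lemma \ref{lem:Inf_to_Entropy_R}, so I would simply write ``The proof is identical to that of Lemma \ref{lem:Inf_to_Entropy_R}, using \eqref{eq:Deltas_L} and \eqref{eq:Deltas_N} in place of \eqref{eq:LtPs_commute} and \eqref{eq:PsM_commute}, and the preceding lemma in place of \eqref{entropyinfo}.'' and, if a fuller exposition is wanted, spell out the chain of equalities and the single inequality displayed above.
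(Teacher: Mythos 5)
Your proposal is correct and is exactly the argument the paper intends: the authors omit the proof, stating it mirrors that of Lemma \ref{lem:Inf_to_Entropy_R} with \eqref{eq:Deltas_L} and \eqref{eq:Deltas_N} in place of \eqref{eq:LtPs_commute} and \eqref{eq:PsM_commute}, and the spherical entropy--information identity in place of \eqref{entropyinfo}, which is precisely what you spell out.
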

The proof uses \eqref{eq:Deltas_L} and \eqref{eq:Deltas_N}. It is very similar to that of Lemma \ref{lem:Inf_to_Entropy_R} and we shall omit it here.

 \section{Proof of Theorem  \ref{spherical}}

Let $F(\mathbf v, \mathbf w)$ be a smooth function on the sphere $\Sp^{M+N-1}(\sqrt{M+N})$, or alternatively a function on $\R^{M+N}$ homogeneous of degree zero. The variable $\mathbf v \in \R^M$ and $\mathbf w \in \R^N$.  We denote by $L F$ its angular momentum, i.e., 
the angular momentum is for $1\le i < j \le M$
\begin{equation*}
(v_i \partial_{v_j} - v_j \partial_{v_i})F \ ,
\end{equation*}
for $1 \le i \le M < j \le M+N$ it is
\begin{equation*}
(v_i \partial_{w_j} - w_j \partial_{v_i})F \ ,
\end{equation*}
and for $M< i< j \le M+N$
\begin{equation*}
(w_i \partial_{w_j} - w_j \partial_{w_i})F \ .
\end{equation*}
We write
\begin{equation*}
|LF|^2=|L^{\rm int} F|^2+|L^{\rm out} F|^2,
\end{equation*}
where 
\begin{equation*}
\begin{cases}
|L^{\rm int}F|^2:=\sum_{1\le i<j\le M} |(v_i \partial_{v_j} - v_j \partial_{v_i})F|^2
+\sum_{1\le i\le M<j\le M+N} |(v_i \partial_{w_j} - w_j \partial_{v_i})F|^2 \ ,\\
|L^{\rm out}F|^2:=\sum_{M< i< j \le M+N} |(w_i \partial_{w_j} - w_j \partial_{w_i})F|^2.
\end{cases}
\end{equation*}
In particular, if a function $F$ is independent of $\mathbf w$, then 
\begin{equation}\label{angular_ind_w}
|LF|^2=|L^{\rm int}F|^2.
\end{equation}
The Information $I(F)$, as defined in \eqref{def:Inf_sphere}, written in terms of the angular momentum is 
\begin{align*}
I(F)=\int_{\Sp^{M+N-1}(\sqrt{M+N})}\frac{|LF|^2}{F} \de \sigma.
\end{align*}

In the following, let $R$ be the rotation in $SO(M+N)$ which we write as
$$
R = \left( \begin{array}{cc} A & B \\ C & D \end{array} \right) \ .
$$
Let $G$ be the average
$$
G(\vau ,\weh):=\int_{SO(N)} F(A^T\vau +C^T S^T\weh, B^T\vau+D^TS^T\weh) \de S=\int_{SO(N)} F\circ R^{-1}(\vau, S^T \weh) \de S  \ .
$$

First, we prove 
\begin{lemma}\label{lem:ave1}
\begin{align*}
I(G)\leq 
I(F)
-\int_{\Sp^{M+N-1}(\sqrt{M+N})} \frac{|L^{\rm out} (F\circ R^{-1})|^2}{F\circ R^{-1}} \de \sigma.
\end{align*}
\end{lemma}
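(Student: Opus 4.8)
The plan is to prove a pointwise (in $\weh$, before integrating against $\de\sigma$) inequality comparing the integrand of $I(G)$ with an average of the integrand of $I(F\circ R^{-1})$, and then to identify the ``lost'' part of the angular momentum as exactly the $L^{\rm out}$ component. First I would record that the marginal $G$ can be written as $G = \mathcal N (F\circ R^{-1})$ in the notation of \eqref{def:marginal}, so that $G$ depends only on $\vau$ and $|\weh|$. The key is that taking the $SO(N)$-average does not change the $\vau$-derivatives or the radial $\weh$-derivative, but it \emph{kills} all the angular derivatives of $\weh$ that act within the reservoir block (the $|L^{\rm out}|^2$ terms). More precisely, writing $\Phi := F\circ R^{-1}$, for each fixed $(\vau,\weh)$ the component of $L\Phi$ ``surviving'' the average is the $SO(N)$-average of the projection of $L\Phi$ onto the subspace of angular momenta that fix $\vau$ and $|\weh|$; the orthogonal complement of that subspace inside all angular momenta is spanned exactly by the $L^{\rm out}$ generators together with the radial direction, but the radial piece is also preserved.

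Concretely, the steps are: (i) write $I(G) = \int \frac{|LG|^2}{G}\de\sigma$, and express $LG$ at a point as an integral over $SO(N)$ of the appropriate pieces of $L\Phi$, using that $L$ commutes with the $\weh \mapsto S^T\weh$ substitution in the directions that are being averaged; (ii) split $|L\Phi|^2 = |L^{\rm proj}\Phi|^2 + |L^{\rm out}\Phi|^2$ where $L^{\rm proj}$ collects the $\vau$-$\vau$, $\vau$-$\weh$, and radial-$\weh$ contributions (everything that descends to a function of $(\vau,|\weh|)$), so that $LG$ is the $SO(N)$-average of $L^{\rm proj}\Phi$; (iii) apply the Cauchy–Schwarz inequality in the ``joint convexity of $(a,b)\mapsto |a|^2/b$'' form — i.e. Jensen applied to the average over $SO(N)$ of $(L^{\rm proj}\Phi, \Phi)$ — to get
\[
\frac{|LG|^2}{G} = \frac{\bigl|\int_{SO(N)} L^{\rm proj}\Phi \,\de S\bigr|^2}{\int_{SO(N)} \Phi\,\de S} \le \int_{SO(N)} \frac{|L^{\rm proj}\Phi|^2}{\Phi}\,\de S\,;
\]
(iv) integrate over $\Sp^{M+N-1}(\sqrt{M+N})$, use that $\de\sigma$ is $SO(N)$-invariant to drop the $\int_{SO(N)}\de S$, giving $I(G) \le \int \frac{|L^{\rm proj}\Phi|^2}{\Phi}\de\sigma$, and finally use $|L^{\rm proj}\Phi|^2 = |L\Phi|^2 - |L^{\rm out}\Phi|^2$ together with $\int \frac{|L\Phi|^2}{\Phi}\de\sigma = \int\frac{|LF|^2}{F}\de\sigma = I(F)$ by rotation-invariance of $\sigma$.

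The main obstacle I anticipate is step (ii)–(iii): carefully justifying that $LG$, as a vector of angular-momentum components on the sphere, really is the $SO(N)$-average of the \emph{projected} angular momentum of $\Phi$, and in particular that the angular derivatives internal to the $\weh$-block contribute nothing to $LG$ while every other component passes through the average unchanged. This requires being precise about how the chain rule interacts with the substitution $\weh\mapsto S^T\weh$ and with the fact that $G$ is constant along the $SO(N)$-orbits — one must check that differentiating $G$ in an $L^{\rm out}$ direction gives zero (immediate, since $G$ depends on $\weh$ only through $|\weh|$), and that differentiating in any other direction commutes appropriately with $\int_{SO(N)}\de S$. Once this bookkeeping of the angular-momentum decomposition is in place, the inequality itself is just Jensen/Cauchy–Schwarz exactly as in the derivation of \eqref{eq:1_main1} and \eqref{contractiveinfo}, and the $SO(N)$-invariance of $\de\sigma$ does the rest.
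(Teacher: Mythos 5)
Your proposal is correct and follows essentially the same route as the paper: your $L^{\rm proj}$ is exactly the paper's $L^{\rm int}$ (the $\vau$--$\vau$ and $\vau$--$\weh$ generators), and the paper likewise uses that $L^{\rm out}G=0$, applies Cauchy--Schwarz/Jensen for $|a|^2/b$ to the $SO(N)$-average, writes $|L^{\rm int}|^2=|L|^2-|L^{\rm out}|^2$, and disposes of the bookkeeping you flag in steps (ii)--(iii) by the rotational covariance of the angular momentum together with the $SO(N)$-invariance of $\sigma$.
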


\begin{proof}
Let us note that since $G$ is independent of $\weh$, we have by \eqref{angular_ind_w}, 
\begin{equation*}
I(G)=\int_{\Sp^{M+N-1}(\sqrt{M+N})}\frac{|LG|^2}{G} \de \sigma=\int_{\Sp^{M+N-1}(\sqrt{M+N})}\frac{|L^{\rm int} G|^2}{G} \de \sigma.
\end{equation*}
By Cauchy-Schwartz inequality, 
the right-hand side above is controlled by 
\begin{align*}
&\int_{\Sp^{M+N-1}(\sqrt{M+N})} \int_{SO(N)} \frac{|L^{\rm int} (F\circ R^{-1}(\vau, S^T \weh))|^2}{F\circ R^{-1}(\vau, S^T \weh)} \de S\, \de \sigma\\
=&\int_{\Sp^{M+N-1}(\sqrt{M+N})} \int_{SO(N)} \frac{|L (F\circ R^{-1}(\vau, S^T \weh))|^2}{F\circ R^{-1}(\vau, S^T \weh)}-\frac{|L^{\rm out} (F\circ R^{-1}(\vau, S^T \weh))|^2}{F\circ R^{-1}(\vau, S^T \weh)} \de S\, \de \sigma.
\end{align*}
Our claimed result then simply follows from the invariance of angular momentum with respect to rotations.
\end{proof}

From now on, we assume $F$ is invariant under rotations acting on $\weh$. Thus $L^{\rm out} F=0$.
However, with non-generic rotation $R^{-1}$, $|L^{\rm out} (F\circ R^{-1})|>0$.
Our next lemma gives an explicit formula for the contribution of this term.
In order to state it, let us note that by the rotation invariance of $F$ in the second variable, 
\begin{equation}\label{def:ell}
\nabla_\weh F = \ell(\vau, |\weh|) \weh 
\end{equation}
for some function $\ell$.
\begin{lemma}\label{lem:Lout}
We have
\begin{align*}
\int_{\Sp^{M+N-1}(\sqrt{M+N})} \frac{|L^{\rm out} (F\circ R^{-1})|^2}{F\circ R^{-1}} \de \sigma
=\int_{\Sp^{M+N-1}(\sqrt{M+N})}\frac{\Sigma_1}{F} \de \sigma,
\end{align*}
where
\begin{align*}
\Sigma_1:= &
|C \vau|^2 |C \nabla_\vau F|^2 - (C \vau \cdot C\nabla_\vau F)^2\nonumber \\
&+ \frac{|\weh|^2}{N}\left[  |C(\ell \vau -\nabla_\vau F)|^2{\rm Tr } (D^TD) -|D^TC(\ell \vau - \nabla_\vau F)|^2 \right]. \nonumber
\end{align*}
\end{lemma}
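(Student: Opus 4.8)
\emph{Proof strategy.} The plan is to reduce the statement to a single second‑moment computation over a sphere in $\R^N$. First I would record the elementary identity
\[
|L^{\rm out}G|^2=|\mathbf w|^2\,|\nabla_{\mathbf w}G|^2-(\mathbf w\cdot\nabla_{\mathbf w}G)^2=:|\mathbf w\wedge\nabla_{\mathbf w}G|^2,
\]
valid for any smooth $G$, where $\nabla_{\mathbf w}$ is the Euclidean gradient in the $\mathbf w$–variables only and $\mathbf x\wedge\mathbf y$ is identified with the antisymmetric matrix $\mathbf x\mathbf y^T-\mathbf y\mathbf x^T$; this is just the observation that $\sum_{M<i<j\le M+N}\big((w_i\partial_{w_j}-w_j\partial_{w_i})G\big)^2$ is one half the squared Hilbert–Schmidt norm of that matrix. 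Next, writing $R^{-1}(\mathbf v,\mathbf w)=(A^T\mathbf v+C^T\mathbf w,\,B^T\mathbf v+D^T\mathbf w)$ and applying the chain rule gives
\[
\nabla_{\mathbf w}(F\circ R^{-1})(\mathbf v,\mathbf w)=C\,(\nabla_{\mathbf v}F)\big(R^{-1}(\mathbf v,\mathbf w)\big)+D\,(\nabla_{\mathbf w}F)\big(R^{-1}(\mathbf v,\mathbf w)\big).
\]

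Then I would change variables $(\mathbf v,\mathbf w)\mapsto R(\mathbf v,\mathbf w)$ in $\int_{\Sp^{M+N-1}(\sqrt{M+N})}\frac{|L^{\rm out}(F\circ R^{-1})|^2}{F\circ R^{-1}}\de\sigma$. Since $\de\sigma$ is rotation invariant and $R^{-1}\circ R=\mathrm{id}$, the denominator becomes $F(\mathbf v,\mathbf w)$, the vector $\mathbf w$ appearing in $|L^{\rm out}\,\cdot\,|^2$ becomes $C\mathbf v+D\mathbf w$, and by the previous display together with \eqref{def:ell} the gradient $\nabla_{\mathbf w}(F\circ R^{-1})$ becomes $C\nabla_{\mathbf v}F(\mathbf v,\mathbf w)+D\nabla_{\mathbf w}F(\mathbf v,\mathbf w)=C\nabla_{\mathbf v}F+\ell\,D\mathbf w$. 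Hence the integral equals $\int_{\Sp^{M+N-1}(\sqrt{M+N})}\frac1F\,\big|(C\mathbf v+D\mathbf w)\wedge(C\nabla_{\mathbf v}F+\ell D\mathbf w)\big|^2\,\de\sigma$. Expanding the wedge bilinearly and using $D\mathbf w\wedge D\mathbf w=0$ yields
\[
(C\mathbf v+D\mathbf w)\wedge(C\nabla_{\mathbf v}F+\ell D\mathbf w)=(C\mathbf v)\wedge(C\nabla_{\mathbf v}F)\;-\;(D\mathbf w)\wedge\mathbf e,\qquad \mathbf e:=C(\ell\mathbf v-\nabla_{\mathbf v}F).
\]

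The last step is to average the square of this expression over $\mathbf w$. Because $F$ is rotationally invariant in $\mathbf w$, the quantities $1/F$, $\ell$ and the vector $\nabla_{\mathbf v}F$ (hence $\mathbf e$) depend on $\mathbf w$ only through $|\mathbf w|$; disintegrating $\de\sigma$ as an integral over the spheres $\{\,\mathbf w:|\mathbf w|=\sqrt{M+N-|\mathbf v|^2}\,\}\subset\R^N$ endowed with their uniform measure, I only need to average $\mathbf w$–polynomials over such a sphere. The cross term $\langle (C\mathbf v)\wedge(C\nabla_{\mathbf v}F),\,(D\mathbf w)\wedge\mathbf e\rangle$ is linear in $\mathbf w$ and drops; the first term $|C\mathbf v|^2|C\nabla_{\mathbf v}F|^2-(C\mathbf v\cdot C\nabla_{\mathbf v}F)^2$ is $\mathbf w$–independent; and using $\langle w_iw_j\rangle=\frac{|\mathbf w|^2}{N}\delta_{ij}$ one gets
\[
\big\langle |(D\mathbf w)\wedge\mathbf e|^2\big\rangle=\big\langle |D\mathbf w|^2|\mathbf e|^2-(D\mathbf w\cdot\mathbf e)^2\big\rangle=\frac{|\mathbf w|^2}{N}\Big(|\mathbf e|^2\,{\rm Tr}(D^TD)-|D^T\mathbf e|^2\Big).
\]
Substituting $\mathbf e=C(\ell\mathbf v-\nabla_{\mathbf v}F)$ and collecting the three contributions reproduces $\Sigma_1$ exactly, and pulling the average back inside the $\de\sigma$–integral (legitimate since $1/F$ is $\mathbf w$–rotation invariant) gives the claimed formula.

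I expect the only delicate points to be bookkeeping: tracking the argument at which each gradient is evaluated through the substitution $(\mathbf v,\mathbf w)\mapsto R(\mathbf v,\mathbf w)$, and justifying that the replacement of the $\mathbf w$–integral by the sphere average applies termwise, i.e.\ that every scalar coefficient multiplying a $\mathbf w$–monomial really is invariant under rotations of $\mathbf w$. The wedge‑product algebra and the identity $\langle w_iw_j\rangle=\frac{|\mathbf w|^2}{N}\delta_{ij}$ are routine.
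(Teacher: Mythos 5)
Your proof is correct and follows essentially the same route as the paper: rotate variables by $R$ so the integrand is evaluated at $(\vau,\weh)$, use $\nabla_\weh F=\ell\weh$, and then average over the directions of $\weh$ with $\langle w_i\rangle=0$, $\langle w_iw_j\rangle=\frac{|\weh|^2}{N}\delta_{ij}$. The only difference is notational (your wedge/cross-term bookkeeping versus the paper's componentwise expansion of $\Sigma_0$), and both groupings lead to the same $\Sigma_1$.
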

The proof follows from a long computation. We leave it to the appendix.

Combing Lemmas \ref{lem:ave1} with \ref{lem:Lout}, we have, after another long but direct computation, the following.
\begin{lemma}\label{lem:ave2}
We have
\begin{equation}
\int_{\Sp^{M+N-1}(\sqrt{M+N})} \frac{|LG|^2}{G} \de \sigma \le  \int_{\Sp^{M+N-1}(\sqrt{M+N})} \frac{\Sigma_2}{F} \de \sigma \nonumber
\end{equation}
where
\begin{align*} 
 \Sigma_2 =& |\vau|^2 |A\nabla_\vau F|^2 +|A\vau|^2 |\nabla_\vau F|^2  - 2(\vau\cdot \nabla_\vau F) (A \vau\cdot  A\nabla_\vau F) \nonumber \\
&+ (1-\frac{M-1}{N})|\weh|^2  |A(\ell \vau -\nabla_\vau F)|^2 + \frac{M}{N} |\weh|^2 |(\ell \vau -\nabla_\vau F)|^2  \ . \nonumber
\end{align*}
\end{lemma}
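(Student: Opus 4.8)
The plan is to reduce everything to a pointwise algebraic inequality between $\Sigma_1$, $\Sigma_2$ and $|L^{\rm int}F|^2$. Since $F$ is invariant under rotations acting on $\weh$, \eqref{angular_ind_w} gives $|LF|^2=|L^{\rm int}F|^2$, so $I(F)=\int_{\Sp^{M+N-1}(\sqrt{M+N})}|L^{\rm int}F|^2F^{-1}\de\sigma$. Combining Lemma \ref{lem:ave1} with Lemma \ref{lem:Lout} then yields
\[
\int_{\Sp^{M+N-1}(\sqrt{M+N})}\frac{|LG|^2}{G}\de\sigma=I(G)\le I(F)-\int_{\Sp^{M+N-1}(\sqrt{M+N})}\frac{\Sigma_1}{F}\de\sigma=\int_{\Sp^{M+N-1}(\sqrt{M+N})}\frac{|L^{\rm int}F|^2-\Sigma_1}{F}\de\sigma ,
\]
so it suffices to prove the pointwise bound $|L^{\rm int}F|^2-\Sigma_1\le\Sigma_2$.

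First I would record the identity
\[
|L^{\rm int}F|^2=|\vau|^2|\nabla_\vau F|^2-(\vau\cdot\nabla_\vau F)^2+|\weh|^2\,|\ell\vau-\nabla_\vau F|^2 ,
\]
which follows from $\partial_{w_j}F=\ell\,w_j$ (the definition \eqref{def:ell} of $\ell$) together with the elementary two--vector identity $\sum_{i<j}(a_ib_j-a_jb_i)^2=|a|^2|b|^2-(a\cdot b)^2$, applied to $(\vau,\nabla_\vau F)$ for the first block of $L^{\rm int}$ and to $(\weh,\ell\vau-\nabla_\vau F)$ for the second.

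Next I would use orthogonality of the rotation $R$, written in block form as before, to eliminate $C$ and $D$ from $\Sigma_1$: from $R^TR=RR^T=I$ one gets $C^TC=I_M-A^TA$, $D^TC=-B^TA$, $B^TB\preceq I_N$, $A^TA\preceq I_M$, and ${\rm Tr}(D^TD)=N-M+{\rm Tr}(A^TA)$. Substituting $|C\mathbf x|^2=|\mathbf x|^2-|A\mathbf x|^2$ for $\mathbf x\in\{\vau,\nabla_\vau F,u\}$, where $u:=\ell\vau-\nabla_\vau F$, the difference $|L^{\rm int}F|^2-\Sigma_1$ splits into a ``$\vau$--part'' and a ``$|\weh|^2$--part''. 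The $\vau$--part reduces to the first three terms of $\Sigma_2$ minus $\big(|A\vau|^2|A\nabla_\vau F|^2-(A\vau\cdot A\nabla_\vau F)^2\big)$, which is nonnegative by Cauchy--Schwarz and is simply discarded. The $|\weh|^2$--part equals $\frac{|\weh|^2}{N}\big[(M-{\rm Tr}(A^TA))|u|^2+(N-M+{\rm Tr}(A^TA))|Au|^2+|D^TCu|^2\big]$; bounding $|D^TCu|^2=|B^TAu|^2\le|Au|^2$, one checks that the last two terms of $\Sigma_2$ exceed it by exactly $\frac{{\rm Tr}(A^TA)}{N}|\weh|^2\big(|u|^2-|Au|^2\big)\ge0$ (using $A^TA\preceq I_M$). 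Adding the two parts gives $|L^{\rm int}F|^2-\Sigma_1\le\Sigma_2$ pointwise, and integrating against $F^{-1}\de\sigma$ finishes the proof.

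The main obstacle is essentially the bookkeeping and, above all, identifying the correct slack in the $|\weh|^2$--part: one must bound $|D^TCu|^2$ by $|Au|^2$ via the identity $D^TC=-B^TA$, rather than by the cruder $|D^TCu|^2\le|Cu|^2=|u|^2-|Au|^2$; only this choice makes the ${\rm Tr}(A^TA)$--terms cancel against the Cauchy--Schwarz deficit coming from the $\vau$--part and reproduce the exact coefficients $1-\frac{M-1}{N}$ and $\frac MN$ in $\Sigma_2$. Everything else is routine manipulation with the six block relations implied by $R\in SO(M+N)$.
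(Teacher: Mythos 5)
Your proof is correct and follows essentially the same route as the paper's appendix argument: reduce via Lemmas \ref{lem:ave1} and \ref{lem:Lout} to the pointwise inequality $|LF|^2-\Sigma_1\le\Sigma_2$, use the block identities $A^TA+C^TC=I_M$, $D^TC=-B^TA$, ${\rm Tr}\,D^TD=N-M+{\rm Tr}\,A^TA$, discard the quartic deficit $|A\vau|^2|A\nabla_\vau F|^2-(A\vau\cdot A\nabla_\vau F)^2\ge 0$ by Cauchy--Schwarz, and control the $\weh$--part with $|B^TAu|\le|Au|$ and ${\rm Tr}\,B^TB\le M$ (you merely track the exact slack $\frac{{\rm Tr}A^TA}{N}|\weh|^2(|u|^2-|Au|^2)$ where the paper bounds the two terms separately, which is the same estimate). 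The only cosmetic slip is attributing the mixed-block contribution $|\weh|^2|\ell\vau-\nabla_\vau F|^2$ to the cross-product identity, whereas it comes from the direct factorization $|v_i\partial_{w_j}F-w_j\partial_{v_i}F|^2=w_j^2(\ell v_i-\partial_{v_i}F)^2$; the identity you state and use is nonetheless correct.
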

The proof of this lemma is presented in the appendix.

We come now to the proof of Theorem \ref{spherical}. As in the proof of Theorem  \ref{main1} we have that
\begin{equation} \label{totalsum}
I(\mathcal N e^{\mathcal L_S t} F_0 ) \le e^{-\Lambda t}\sum_{k=0}^\infty \frac{(M+N)^kt^k}{k!}
\sum_{\alpha_1, \dots, \alpha_k} \int_0^{2\pi}\nu(\de \theta_{\alpha_1})  \cdots \int_0^{2\pi}\nu(\de \theta_{\alpha_k}) I( \mathcal N F_0\circ R^{-1}(\underline{\alpha}, \underline{\theta}) ) \ .
\end{equation}
By Lemma \ref{lem:ave2} and Lemma \ref{sumrule} we find
\begin{align*}
&\sum_{\alpha_1, \dots, \alpha_k} \int_0^{2\pi}\nu(\de \theta_{\alpha_1})  \cdots \int_0^{2\pi}\nu(\de \theta_{\alpha_k}) I( \mathcal N F_0\circ R^{-1}(\underline{\alpha}, \underline{\theta}) ) \nonumber \\
\le & \int_{\Sp^{M+N-1}(\sqrt{M+N})} \mathcal{C}_k \frac{2|\vau|^2 |\nabla_\vau F_0|^2 -2(\vau \cdot \nabla_\vau F_0)^2 
+(1-\frac{M-1}{N})|\weh|^2  |\ell \vau -\nabla_\vau F_0)|^2}{F_0} \nonumber \\
&+\frac{M}{N}  \int_{\Sp^{M+N-1}(\sqrt{M+N})} \frac{|\weh|^2 |\ell \vau -\nabla_\vau F_0|^2}{F_0} \de \sigma \ .
\end{align*}
Inserting this estimate into the equation \eqref{totalsum} yields
\begin{align*}
&I(\mathcal N e^{\mathcal L_S t} F_0 ) \le  \left(\frac{M}{M+N} + e^{-\mu t} \frac{N}{M+N} \right) \times \nonumber \\
&\times \int_{\Sp^{M+N-1}(\sqrt{M+N})} \frac{2|\vau|^2 |\nabla_\vau F_0|^2 -2(\vau \cdot \nabla_\vau F_0)^2 
+(1-\frac{M-1}{N})|\weh|^2  |\ell \vau -\nabla_\vau F_0)|^2}{F_0} \de \sigma \nonumber \\
&+ \frac{M}{N}  \int_{\Sp^{M+N-1}(\sqrt{M+N})} \frac{|\weh|^2 |\ell \vau -\nabla_\vau F_0|^2}{F_0} \de \sigma \ .
\end{align*}
Elementary estimates yield
\begin{align*}
I(\mathcal N e^{\mathcal L_S t} F_0 ) \le  \left[ 2\left(\frac{M}{M+N} + e^{-\mu t} \frac{N}{M+N} \right) +
\frac MN\right]  \int_{\Sp^{M+N-1}(\sqrt{M+N})} \frac{|LF_0|^2}{F_0} \de \sigma,
\end{align*}
which is the claimed result. \qed

\appendix
\section{Proofs of Lemmas \ref{lem:Lout} and \ref{lem:ave2}}
\subsection{Proof of Lemma \ref{lem:Lout}}
We have
\begin{align*}
\mathcal{I}:
=&\int_{\Sp^{M+N-1}(\sqrt{M+N})} \frac{|L^{\rm out} (F\circ R^{-1})|^2}{F\circ R^{-1}} \de \sigma\\
=&\int_{\Sp^{M+N-1}(\sqrt{M+N})} \frac{\sum^{\mathrm{out}} |\omega_i( C\nabla_{\vau}F+D\nabla_{\weh}F)_j-\omega_j ( SC\nabla_{\vau}F+D\nabla_{\weh}F)_i|^2}{F}  \de \sigma,
\end{align*}
where the argument of the functions is of the form $(A^T\vau+C^T \weh, B^T\vau+D^T \weh)$, and $\sum^{\rm out}:=\sum_{M+1\leq i<j\leq N+M}$.
Next we change variable $\vau=A\vau+B\weh$, $\weh=C\vau+D\weh$, we have
\begin{align*}
\mathcal{I}&=\int_{\Sp^{M+N-1}(\sqrt{M+N})} \de \sigma\\ 
&\qquad \frac{\sum^{\mathrm{out}}
 |(C\vau+D\weh)_i( C\nabla_{\vau}F+D\nabla_{\weh}F)_j-(C\vau+D\weh)_j (C\nabla_{\vau}F+D\nabla_{\weh}F)_i|^2}{F}\\
 &=: \int_{\Sp^{M+N-1}(\sqrt{M+N})} \frac{\Sigma_0}{F} \de \sigma
\end{align*}
By \eqref{def:ell}, we have
\begin{align*}
\Sigma_0
=&\sum^{\mathrm{out}}
 |(C\vau+D\weh)_i( C\nabla_{\vau}F+\ell D \weh)_j-(C\vau+D\weh)_j (C\nabla_{\vau}F+\ell D\weh)_i|^2\\
 =&\sum^{\mathrm{out}}
 |(C\vau+D\weh)_i (C (\nabla_{\vau} F-\ell \vau))_j-(C\vau+D\weh)_j (C (\nabla_{\vau} F-\ell \vau))_i|^2.
\end{align*}
Now we average over the directions of the vector $\weh$. Recall that the function $F$ does not depend on this direction nor does $\ell$. Denote such an average by $\langle \cdot \rangle$. Hence we get, after averaging over each pair $(i,j)$,
\begin{align*}
\langle \Sigma_0\rangle=&\sum^{\mathrm{out}} |(C\vau)_i(C (\nabla_{\vau} F-\ell \vau))_j-(C\vau)_j C (\nabla_{\vau} F-\ell \vau))_i|^2\\
&+\sum^{\mathrm{out}}\Big\langle |(D\weh)_i(C (\nabla_{\vau} F-\ell \vau))_j-(D\weh)_j(C (\nabla_{\vau} F-\ell \vau))_i|^2 \Big\rangle\\
=& |C\vau|^2 |C \nabla_{\vau} F |^2-(C\vau \cdot C \nabla_{\vau} F)^2\\
&+\Big\langle |D\weh|^2 \Big\rangle |C (\nabla_{\vau} F-\ell \vau)|^2-\Big\langle (\weh, D^T C (\nabla_{\vau} F-\ell \vau))^2\Big\rangle
\end{align*}
where we used that $\langle \omega_i\rangle=0$.
Next, by using that 
$$\langle \omega_i\, \omega_j\rangle=\frac{|\weh|^2}{N} \delta_{ij},$$
we have
\begin{align*}
\langle\Sigma_0\rangle=&|C \vau|^2 |C \nabla_\vau F|^2 -(C \vau \cdot C\nabla_\vau F)^2 + \\
&\frac{|\weh|^2}{N} \left[ {\rm Tr } (D^T D) |C(\ell \vau -\nabla_\vau F)|^2 -|D^TC(\ell \vau - \nabla_\vau F)|^2 \right]
\end{align*}
Finally, just note that due to the fact that $F$ does not depend on the direction of $\weh$, we have
\begin{align*}
\mathcal{I}= \int_{\Sp^{M+N-1}(\sqrt{M+N})} \frac{\langle \Sigma_0 \rangle}{F} \de \sigma.
\end{align*}
This proves the lemma.
\qed

\subsection{Proof of Lemma \ref{lem:ave2}}
Combining Lemma \ref{lem:ave1} with \ref{lem:Lout}, it suffices to show
\begin{equation}\label{eq:lem2}
|LF|^2-\Sigma_1\leq \Sigma_2.
\end{equation}
 
Using the rotation invariance of $F$ in the second variable we write
\begin{eqnarray}
|LF|^2 &=& \sum_{i<j} |v_i\partial_{v_j} F - v_j \partial_{v_i} F|^2 + \sum_i \sum_j |w_j \partial_{v_i}F - v_i \partial_{w_j} F|^2 \nonumber. \\ 
&= &
|\vau |^2 |\nabla_\vau F|^2 - (\vau \cdot \nabla_\vau F)^2 + |\weh|^2 |\nabla_\vau F|^2 + |\vau|^2 |\weh|^2 \ell^2 - 2|\weh|^2 \ell (\vau \cdot \nabla_\vau F)  \nonumber \\
&=& |\vau |^2 |\nabla_\vau F|^2 - (\vau \cdot \nabla_\vau F)^2 + |\weh|^2 |\ell \vau - \nabla_\vau F|^2 \ . \nonumber
\end{eqnarray}
It remains, thus, to estimate  
\begin{eqnarray}\label{eq:1}
& &|\vau |^2 |\nabla_\vau F|^2 - (\vau \cdot \nabla_\vau F)^2 + |\weh|^2 |\ell \vau - \nabla_\vau F|^2 \nonumber \\
&-& |C \vau|^2 |C \nabla_\vau F|^2 + (C \vau \cdot C\nabla_\vau F)^2 \nonumber \\
&-&\frac{|\weh|^2}{N} \left[  |C(\ell \vau -\nabla_\vau F)|^2{\rm Tr } D^TD -|D^TC(\ell \vau - \nabla_\vau F)|^2 \right]
\end{eqnarray}
Since $R$ is a rotation we have that
$$
\left( \begin{array}{cc} A & B \\ C & D \end{array} \right) \left( \begin{array}{cc} A^T & C^T \\ B^T & D^T \end{array} \right) =
\left( \begin{array}{cc} I_M & 0 \\ 0 & I_N \end{array} \right)
$$
and
$$
\left( \begin{array}{cc} A^T & C^T \\ B^T & D^T \end{array} \right)\left( \begin{array}{cc} A & B \\ C & D \end{array} \right)
=\left( \begin{array}{cc} I_M & 0 \\ 0 & I_N \end{array} \right) \ .
$$
Using this we get that
$$
{\rm Tr} D^TD = N-{\rm Tr}B^TB = N-M + {\rm Tr}A^TA \ , A^TA+C^TC = I_M
$$
and
$$
D^TC = -B^TA \ .
$$ 
Using these relations \eqref{eq:1} becomes
\begin{align*}
& |\vau |^2 |\nabla_\vau F|^2 - (\vau \cdot \nabla_\vau F)^2 + |\weh|^2 |\ell \vau - \nabla_\vau F|^2  \\
-& (| \vau|^2 - |A \vau|^2)(  | \nabla_\vau F|^2-|A \nabla_\vau F|^2) +  \left[(\vau, \nabla_\vau F)  -( A \vau\cdot A\nabla_\vau F)\right]^2  \\
-&\frac{|\weh|^2}{N} \left[ (  |(\ell \vau - \nabla_\vau F)|^2-|A(\ell \vau -\nabla_\vau F)|^2)(N-{\rm Tr }B^TB) -|B^TA(\ell \vau - \nabla_\vau F)|^2 \right] 
\end{align*}
which can be simplified to
\begin{align*} 
& |\vau|^2 |A\nabla_\vau F|^2 +|A\vau|^2 |\nabla_\vau F|^2 - |A\vau|^2 |A\nabla_\vau F|^2   + (A \vau \cdot A\nabla_\vau F)^2 - 2(\vau\cdot \nabla_\vau F) (A \vau\cdot A\nabla_\vau F)  \\
+&|\weh|^2  |A(\ell \vau -\nabla_\vau F)|^2  \\
+&\frac{|\weh|^2}{N} \left[ (  |(\ell \vau -\nabla_\vau F)|^2-|A(\ell \vau -\nabla_\vau F)|^2){\rm Tr }B^TB +|B^TA(\ell \vau - \nabla_\vau F)|^2 \right] \ . 
\end{align*}
Since $ AA^T+BB^T = I_M$ and ${\rm Tr} B^TB = {\rm Tr} BB^T = M - {\rm Tr} AA^T$
we obtain an upper bound
\begin{align*} 
& |\vau|^2 |A\nabla_\vau F|^2 +|A\vau|^2 |\nabla_\vau F|^2 - |A\vau|^2 |A\nabla_\vau F|^2   + (A \vau \cdot A\nabla_\vau F)^2 - 2(\vau \cdot \nabla_\vau F) (A \vau \cdot A\nabla_\vau F)   \\
+&|\weh|^2  |A(\ell \vau -\nabla_\vau F)|^2 \\
+&\frac{|\weh|^2}{N} \left[ (  |(\ell \vau -\nabla_\vau F)|^2-|A(\ell \vau -\nabla_\vau F)|^2)M +|A(\ell \vau - \nabla_\vau F)|^2 \right] \ .
\end{align*}
which can be simplified to
\begin{align*}
& |\vau|^2 |A\nabla_\vau F|^2 +|A\vau|^2 |\nabla_\vau F|^2 - |A\vau|^2 |A\nabla_\vau F|^2   + (A \vau \cdot A\nabla_\vau F)^2 - 2(\vau \cdot \nabla_\vau F) (A \vau \cdot A\nabla_\vau F)   \\
+& (1-\frac{M-1}{N})|\weh|^2  |A(\ell \vau -\nabla_\vau F)|^2 + \frac{M}{N} |\weh|^2 |(\ell \vau -\nabla_\vau F)|^2  \ .
\end{align*}

The terms quartic in $A$ we do not know how to handle but noting that $- |A\vau|^2 |A\nabla_\vau F|^2   + (A \vau \cdot A\nabla_\vau F)^2\le 0$ we get the bound
\begin{eqnarray} 
|LF|^2-\Sigma_1\leq & &  |\vau|^2 |A\nabla_\vau F|^2 +|A\vau|^2 |\nabla_\vau F|^2  - 2(\vau \cdot \nabla_\vau F) (A \vau \cdot A\nabla_\vau F)  \nonumber \\
&+& (1-\frac{M-1}{N})|\weh|^2  |A(\ell \vau -\nabla_\vau F)|^2 + \frac{M}{N} |\weh|^2 |(\ell \vau -\nabla_\vau F)|^2=\Sigma_2  \ , \nonumber
\end{eqnarray}
as claimed.
\qed

\section{Proof of Lemma \ref{sumrule}}
One can think about $K$ as the top left entry of the matrix
\begin{align*}
\sum_{\alpha_1, \dots, \alpha_k} \lambda_{\alpha_1} \cdots 
\lambda_{\alpha_k} \int_{[-\pi,\pi]^k} \nu(\mathrm{d} \theta_1)  \cdots 
\nu(\mathrm{d} \theta_k) \, \left[\prod_{l=1}^k 
r_{\alpha_l}(\theta_l)\right]^{-1}  \begin{pmatrix} I_M & 0 \\ 0 & 
0\end{pmatrix} \left[\prod_{l=1}^k r_{\alpha_l}(\theta_l)\right] \ . 
\end{align*}
The computation hinges on a repeated application of the elementary identity 
\begin{align*}
& \int_{-\pi}^{\pi} \nu(\mathrm{d}\theta) \,  
\begin{pmatrix}\cos(\theta) & -\sin(\theta)\\ \sin(\theta)  & 
\cos(\theta)\end{pmatrix}
\begin{pmatrix} m_1 & 0\\ 0 & m_2\end{pmatrix}
\begin{pmatrix} \cos(\theta) & \sin(\theta)\\ -\sin(\theta) & 
\cos(\theta)\end{pmatrix} \\
& = \begin{pmatrix} (1-\tilde\nu) m_1+\tilde \nu m_2 & 0\\
0 & (1-\tilde\nu) m_2+\tilde \nu m_1 \end{pmatrix} \ ,
\end{align*}
where $\tilde\nu=\int \nu(\mathrm{d} \theta) \, \sin^2(\theta).$
For this to be true we just need \eqref{assumprho}.
We easily check that for the 
rotations $r_\alpha(\theta)$ 
\begin{align} \label{eq:proofsumrule}
& \sum_{\alpha} \lambda_{\alpha} \int_{-\pi}^{\pi} \nu(\mathrm{d} \theta) \, 
r_\alpha(\theta)^{-1}  \begin{pmatrix}  m_1 I_M & 0 \\ 0 &  
m_2 I_N\end{pmatrix}   r_\alpha(\theta)\notag \\ 
& = \frac{1}{\Lambda}\left(\frac{M\lambda_S}{2}+\frac{N\lambda_R}{2}\right)
\begin{pmatrix}  m_1 I_M & 0 \\ 0 &  
m_2 I_N\end{pmatrix} \notag \\
& \qquad + \frac{\mu}{\Lambda N}\begin{pmatrix}   
N(M-1)+N((1-\tilde\nu) m_1 + \tilde\nu m_2)I_M & 0 \\0 & (N-1)M+ 
 M(\tilde\nu m_1 +(1-\tilde\nu)m_2) I_N \end{pmatrix} \notag 
\\
& = \begin{pmatrix}  m_1 I_M & 0 \\ 0 &  
m_2 I_N\end{pmatrix}+ \frac{\mu_\nu}{\Lambda N}
\begin{pmatrix}  
N( m_2-m_1)I_M & 0 \\
0 &  M(m_1 -m_2)I_N
    \end{pmatrix} \ .
\end{align}
where $\mu_\nu=\tilde \nu \mu$. Denote by $L(\nu_1,\nu_2)$  the $(N+M)\times(N+M)$ matrix  
\begin{align*}
L(m_1,m_2)=\begin{pmatrix}  m_1 I_{M} & 0\\ 0 & 
m_2 I_N\end{pmatrix} \ ,
\end{align*}
and set 
\begin{align*}
\mathcal P=I_2 - \frac{\mu_\nu}{\Lambda N}\begin{pmatrix} 
                                            N & -N\\
                                            -M & M
                                           \end{pmatrix} \ .
\end{align*}
Then  \eqref{eq:proofsumrule} is recast as
\begin{align} \label{eq: proofsumrulereduced}
 \sum_{\alpha} \lambda_{\alpha} \int_{-\pi}^{\pi} \nu(\mathrm{d} \theta) \, 
r_\alpha(\theta)^{-1}L(m_1,m_2)r_\alpha(\theta)= L(m_1',m_2') \ ,
\end{align}
where 
\begin{align*}
\begin{pmatrix}  m_1' \\ m_2'\end{pmatrix}=
\mathcal P\begin{pmatrix}  m_1 \\ m_2\end{pmatrix} \ .
\end{align*}
By a repeated application of \eqref{eq: proofsumrulereduced} we obtain
\begin{align*}
& \sum_{\alpha_1, \dots, \alpha_k} \lambda_{\alpha_1} 
\cdots \lambda_{\alpha_k} \int_{[-\pi,\pi]^k} \nu(\mathrm{d}\theta_1) \, 
\cdots \nu(\mathrm{d}\theta_k) \, \left[\prod_{j=1}^k r_{\alpha_j}(\theta_j)\right]^T L(\underline m)\left[\prod_{j=1}^k r_{\alpha_j}(\theta_j)\right] = \vphantom{\sum_{\alpha_j}} L(\mathcal P^k\underline m) \ .
\end{align*}
Thus,
\[
K =\left(\mathcal P^k\, \begin{pmatrix}  1 \\ 
0\end{pmatrix}\right)_1 I_M \ .
\]
It is easy to see that $\mathcal P$ has eigenvalues $\ell_1=1$ and 
$\ell_2=1-\mu_\nu (M+N)/(\Lambda N)$ with eigenvectors $\underline 
m_1=(1, 1)$ and $\underline m_2=(N,-M)^T/(M+N)$. Consequently, 
\[
 \begin{pmatrix}  1 \\ 0\end{pmatrix}=\frac{M}{N+M}\underline 
m_1+\underline m_2 \ ,
\]
which yields
\[
 \left(\mathcal P^k\, \begin{pmatrix}  1 \\ 
0\end{pmatrix}\right)_1=\frac{M}{N+M}+ \frac{N}{M+N}\left(1-\mu_\nu
\frac{M+N}{\Lambda N}\right)^k \ .
\]
This proves Lemma \ref{sumrule}. \hfill $\qed$

\bibliography{nonequi}{}

\begin{thebibliography}{1}

\bibitem{BLTV}
F.~Bonetto, M.~Loss, H.~Tossounian, and R.~Vaidyanathan.
\newblock Uniform approximation of a {M}axwellian thermostat by finite
  reservoirs.
\newblock {\em Comm. Math. Phys.}, 351(1):311--339, 2017.

\bibitem{BGLR}
Federico Bonetto, Alissa Geisinger, Michael Loss, and Tobias Ried.
\newblock Entropy decay for the {K}ac evolution.
\newblock {\em Comm. Math. Phys.}, 363(3):847--875, 2018.

\bibitem{CCL1}
Eric Carlen, M.~C. Carvalho, and Michael Loss.
\newblock Many-body aspects of approach to equilibrium.
\newblock In {\em Journ\'ees ``\'{E}quations aux {D}\'eriv\'ees {P}artielles''
  ({L}a {C}hapelle sur {E}rdre, 2000)}, pages Exp.\ No.\ XI, 12. Univ. Nantes,
  Nantes, 2000.

\bibitem{amit}
Amit Einav.
\newblock On {V}illani's conjecture concerning entropy production for the {K}ac
  master equation.
\newblock {\em Kinet. Relat. Models}, 4(2):479--497, 2011.

\bibitem{Jeanvresse}
Elise Janvresse.
\newblock Spectral gap for {K}ac's model of {B}oltzmann equation.
\newblock {\em Ann. Probab.}, 29(1):288--304, 2001.

\bibitem{kac}
M.~Kac.
\newblock Foundations of kinetic theory.
\newblock In {\em Proceedings of the {T}hird {B}erkeley {S}ymposium on
  {M}athematical {S}tatistics and {P}robability, 1954--1955, vol. {III}}, pages
  171--197, Berkeley and Los Angeles, 1956. University of California Press.

\bibitem{KacBook}
Mark Kac.
\newblock {\em Probability and related topics in physical sciences}, volume
  1957 of {\em With special lectures by G. E. Uhlenbeck, A. R. Hibbs, and B.
  van der Pol. Lectures in Applied Mathematics. Proceedings of the Summer
  Seminar, Boulder, Colo.}
\newblock Interscience Publishers, London-New York, 1959.

\bibitem{Ledoux}
M.~Ledoux.
\newblock On an integral criterion for hypercontractivity of diffusion
  semigroups and extremal functions.
\newblock {\em J. Funct. Anal.}, 105(2):444--465, 1992.

\bibitem{villani}
C{\'e}dric Villani.
\newblock Cercignani's conjecture is sometimes true and always almost true.
\newblock {\em Comm. Math. Phys.}, 234(3):455--490, 2003.

\end{thebibliography}
\bibliographystyle{plain}

\end{document}